\documentclass{article}

\usepackage{microtype}
\usepackage{graphicx}
\usepackage{subcaption}
\usepackage{booktabs} 
\usepackage{algorithm}
\usepackage{algorithmic}
\usepackage{float}
\usepackage{xcolor}

\usepackage{hyperref}

\usepackage[accepted]{icml2026}

\usepackage{amsmath}
\usepackage{amssymb}
\usepackage{mathtools}
\usepackage{amsthm}
\usepackage{multicol}
\usepackage{multirow}

\usepackage[capitalize,noabbrev]{cleveref}

\theoremstyle{plain}
\newtheorem{theorem}{Theorem}[section]
\newtheorem{proposition}[theorem]{Proposition}

\theoremstyle{definition}

\theoremstyle{remark}

\usepackage[textsize=tiny]{todonotes}

\icmltitlerunning{Harmonic Torsional Diffusion for Protein-Ligand Flexible Docking}

\begin{document}

\twocolumn[
  \icmltitle{Harmonic Torsional Diffusion for Protein-Ligand Flexible Docking}



  \icmlsetsymbol{equal}{*}

  \begin{icmlauthorlist}
    \icmlauthor{Maksim Zhdanov}{comp,yyy}
    \icmlauthor{Pavel Strashnov}{yyy}
    \icmlauthor{Vladislav Kurenkov}{comp,yyy}
  \end{icmlauthorlist}

  \icmlaffiliation{yyy}{AXXX}
  \icmlaffiliation{comp}{dunnolab.ai}

  \icmlcorrespondingauthor{Maksim Zhdanov}{dwdcodes@gmail.com}

  \icmlkeywords{Machine Learning, ICML}

  \vskip 0.3in
]



\printAffiliationsAndNotice{}  

\begin{abstract}
  Molecular docking requires reasoning jointly about ligand pose and protein flexibility. Most diffusion-based docking models predict torsional updates with generic Euclidean heads that ignore the periodic geometry of angular variables. This mismatch is especially limiting in flexible docking, where ligand conformations and pocket side chains co-adapt to form the bound complex. Here, we introduce Harmony, a harmonic torsional diffusion framework for flexible protein-ligand docking. Harmony parameterizes ligand and side-chain torsional score fields as derivatives of learned harmonic potentials on the circle, whose noise-level dependence is supplied analytically by the heat semigroup of variance-exploding diffusion on the torus. This construction makes periodicity explicit and gives the model a frequency-aware inductive bias over rotameric motion. On the PDBBind benchmark, Harmony improves ligand pose accuracy and pocket all-atom reconstruction over recent flexible docking methods. On PoseBusters, it improves the physical validity of generated complexes. Case studies on EBNA1 and KRAS G12D illustrate the method's behavior on a polar and a shallow binding site, respectively. Together, these results indicate that aligning the score parameterization with the geometry of the diffusion process is a simple and effective lever for improving flexible docking.
\end{abstract}

\section{Introduction}

Understanding binding phenomena between small molecules and proteins is central to modern drug discovery and computational methods aim to characterize these interactions through molecular docking \citep{pagadala2017software}. The field has evolved substantially, beginning with classical search-based strategies and more recently incorporating deep learning approaches. Despite these advances, a large portion of existing methods operates under the simplifying assumption that protein structures remain fixed during binding.\looseness=-1

This assumption contrasts with the inherently dynamic nature of proteins, which frequently adopt different conformations upon ligand association \citep{weikl2014conformational}. To address this, computational strategies that account for structural variability are typically divided into co-folding and flexible docking approaches \citep{abramson2024accurate, corso2025composing}. Co-folding attempts to determine the bound configuration of both protein and ligand simultaneously, treating the complex as a single prediction problem. In contrast, flexible docking focuses on capturing the conformational adjustments that occur between pre-existing unbound and bound states, offering advantages in efficiency, interpretability and controllability.\looseness=-1

\begin{figure*}[t]
  \centering
  \includegraphics[width=0.95\textwidth]{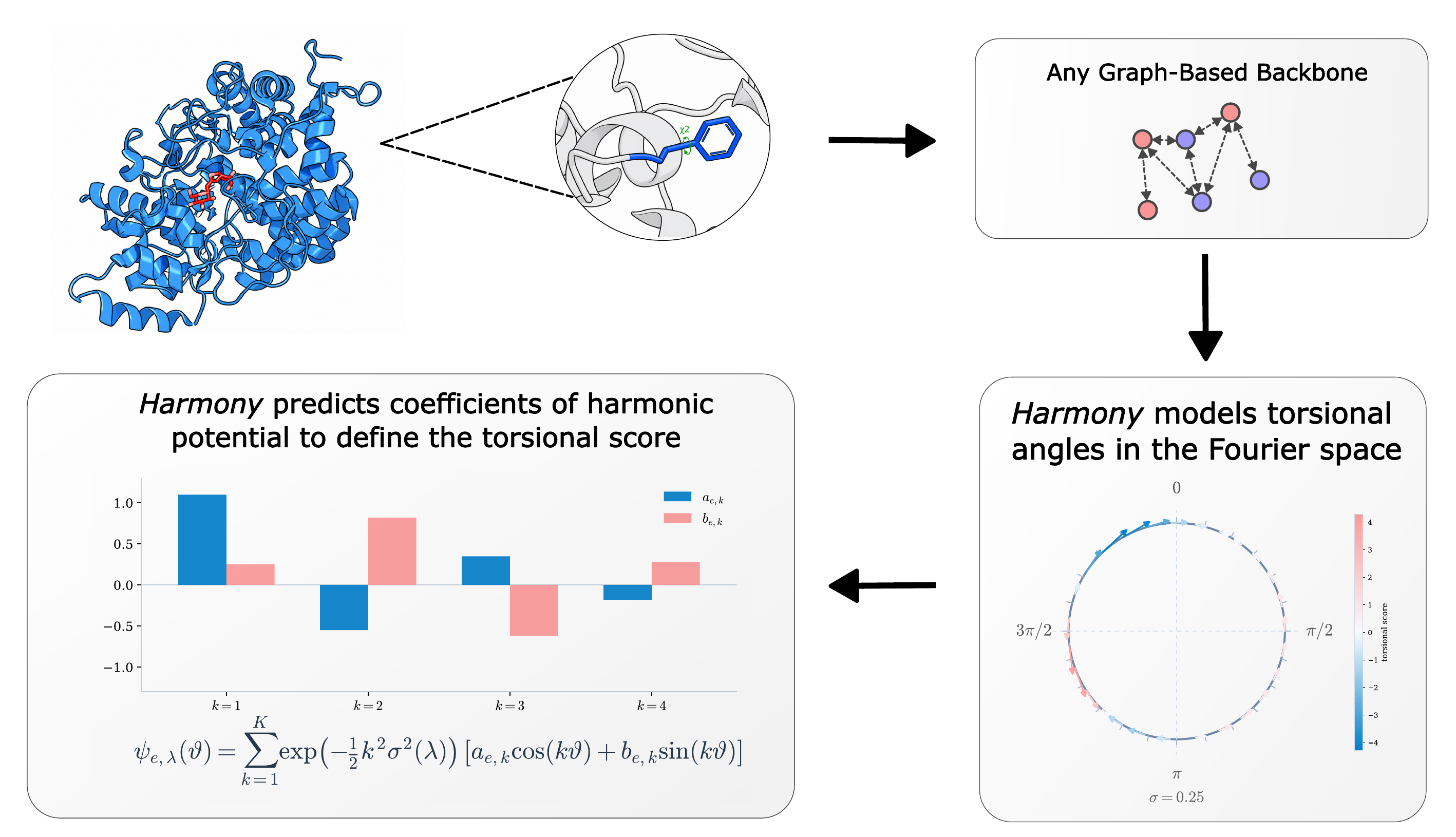}
  \caption{
    Overview of \textit{Harmony}. Starting from a representation produced by any graph-based neural backbone, \textit{Harmony} replaces direct torsion regression with a structured periodic parameterization. Instead of predicting a single angular update, the model learns harmonic coefficients that define a Fourier potential, whose derivative yields the torsional score field for ligand and flexible side-chain dihedrals. This construction makes periodicity explicit and enables the model to naturally represent difficult torsional configurations. See more in \Cref{subsec:harmony}.
  }
  \label{fig:harmony-main}
\end{figure*}

Even so, achieving reliable accuracy with flexible docking remains a challenge. Traditional search-based methods are hindered by the expanded dimensionality introduced by protein flexibility, making exhaustive exploration impractical \citep{mcnutt2021gnina, koes2013lessons}. Deep learning-based methods attempt to mitigate this by integrating diffusion or flow-based frameworks with or without additional flexibility \citep{corso2022diffdock, plainer2023diffdock, corso2025composing}. However, they rely on the basic Riemannian generative machinery \citep{jing2022torsional, de2022riemannian, chen2023flow}, which does not utilize the rich geometric structure inherent in conformational degrees-of-freedom. In particular, protein side chains demonstrate specific conformations that are called rotamers, arising from rotations around $\chi$ dihedral (torsional) angles \citep{branden2012introduction}. The exact configuration of side chains is largely explained by steric repulsion between atoms whose spatial arrangement forms torsional angles. Because of this, most amino acids prefer to sit in discrete states in which they are positioned by some optimal dihedral angle value relative to the next groups. Importantly, ligand binding can further shift side chain rotamers by means of induced fit, reflecting a coupled relationship in which both ligand conformation and protein side-chain orientations adapt to optimize interactions \citep{gaudreault2012side}. Deep learning models for flexible docking should be able to naturally learn those multi-frequency motion features \citep{bahar2010global} to enable higher structural validity of predicted poses.\looseness=-1 

To address those challenges we propose \textit{Harmonic Torsional Diffusion} (in experimental section coined \textit{Harmony}), where we parametrize torsional angles of ligands and protein side chains in a Fourier basis, which comes with a clean analytical score function formulation and efficient noising structure which gradually suppresses frequency motion components coming from high to low harmonic variations.

We study this parametrization from the formal and empirical perspectives. We demonstrate that Harmony improves over recent generative methods for flexible docking on PDBBind benchmark \citep{liu2017forging} by $\sim$24.5\% points while following their training setup and model architecture. Moreover, on PoseBusters set \citep{buttenschoen2024posebusters}, Harmony shows improvements in chemical validity of generated poses over the large portion of baselines. To demonstrate specific mechanism of our method empirically we conduct two case studies: EBNA1 and KRAS G12D. Additionally, we provide ablation studies on design choices behind \textit{Harmony}.

In summary, we propose Harmonic Torsional Diffusion, a new generative framework for flexible docking. We prove that it naturally represents the specific degrees-of-freedom central to molecular binding. 

The source code is publicly available.\footnote{
\url{https://github.com/dunnolab/harmony}
}

\section{Related Work}

\paragraph{Flexible Molecular Docking.}

Flexible docking is typically formulated under the assumption that the unbound conformation of a protein (apo state) is available, with the objective of predicting the structural adjustments that lead to the bound complex (holo state). Classical docking approaches tackle this problem by defining an energy or scoring function and exploring the space of possible ligand poses to identify configurations that minimize this objective \citep{friesner2004glide, thomsen2006moldock, trott2010autodock}. In principle, protein flexibility, crucial to real-world applications \citep{teague2003implications}, can be incorporated by extending the search space to include additional degrees of freedom, such as side-chain torsional angles within the binding pocket. In practice, however, this significantly enlarges the search space, making it difficult to efficiently identify optimal joint configurations and often resulting in physically inconsistent or suboptimal poses.

More recently, deep learning-based approaches have been introduced to address these limitations. Methods such as DiffDock-Pocket \citep{plainer2023diffdock}, Re-Dock \citep{huang2024re}, FlexDock \citep{corso2025composing} extend generative frameworks, ranging from standard diffusion models to diffusion bridges and flow-based formulations, to account for flexibility in both ligands and protein binding pockets. Within this line of work, some methods restrict flexibility to side chains while treating the protein backbone as fixed, whereas others attempt to capture full structural variability, including backbone motion. Both perspectives can be justified depending on the biological context. Recently, SigmaDock \citep{prat2025sigmadock} also explored the idea that the torsional module for docking is not optimal by reframing generative process with molecular fragmentation. However, they only focus on rigid pocket-based docking which is an easier subtask. 

The approach presented here departs from these existing methods by introducing a new generative framework specifically designed to naturally model torsional motion in both ligands and proteins. This \textit{harmonic torsional diffusion} formulation offers advantages in both efficiency and predictive performance. It is applicable in both pocket-based and blind docking settings and can accommodate either rigid or flexible protein representations. In the present work, the focus is restricted to side-chain flexibility, with a more detailed description of the task formulation provided in \Cref{sec:prelim}.

\paragraph{Diffusion Models.}
Score-based diffusion models \citep{song2020score} define a continuous diffusion process $dx=f(x,t)dt+g(t)dw$ that transports the data distribution $p_0$ in a simple prior $p_T$. This has the corresponding reverse SDE $dx=[f(x,t)-g(t)^2\nabla_x \log p_t(x)]dt+g(t)dw$ where only the score $\nabla_x \log p_t(x)$ is unknown. Using denoising score matching \citep{vincent2011connection}, one can learn $s(x,t) \approx \nabla_x \log p_t(x)$ and use it to run the reverse SDE to obtain samples from the data distribution. We base \textit{Harmony's} noising process on Variance-Exploding (VE) SDE and further demonstrate its relation to the circular nature of torsional angles.



\section{Preliminaries} \label{sec:prelim}

\paragraph{All-Atom Ligand--Protein Representation.}
We represent each complex as a heterogeneous graph
\[
\mathcal{G}=\{\mathcal{V}\coloneqq(\mathcal{V}^l,\mathcal{V}^b,\mathcal{V}^a),\;
\mathcal{E}\coloneqq(\mathcal{E}^l,\mathcal{E}^b,\mathcal{E}^a,\mathcal{E}^{lb},\mathcal{E}^{la},\mathcal{E}^{ab})\},
\]
where \(\mathcal{V}^l\) denotes ligand atoms, \(\mathcal{V}^b\) residue-level receptor backbone nodes at \(C_\alpha\) positions, and \(\mathcal{V}^a\) receptor atoms including both backbone and side-chain atoms. Ligand nodes carry standard atom-level chemical descriptors, residue nodes encode amino-acid identity together with a pretrained ESM-2 embedding \citep{lin2022language}, and receptor atom nodes encode residue identity and atom-type information. The edge sets \(\mathcal{E}^l,\mathcal{E}^b,\mathcal{E}^a\) capture intra-graph ligand, residue, and atom interactions, while \(\mathcal{E}^{lb},\mathcal{E}^{la},\mathcal{E}^{ab}\) couple ligand atoms to receptor residues, ligand atoms to receptor atoms, and receptor atoms to their parent residues. This multi-scale construction allows the model to reason jointly over ligand chemistry, residue-level geometry, and full atomic structure. Detailed feature definitions and graph construction are given in Appendix \ref{app:graph}.

\paragraph{Pocket-Based Flexible Docking.}
To focus modeling on the binding interface, we replace the full receptor with a pocket-centered subgraph \(\mathcal{G}_{\mathrm{pocket}}\) containing the ligand together with nearby receptor residues and atoms, following standard pocket-reduction practice \citep{corso2025composing}. We keep the protein backbone fixed and allow motion only in pocket side chains, since side-chain rearrangements near the binding site account for much of the local conformational adaptation required for docking \citep{miao2016quantifying,clark2019inherent,alberts2005receptor}. The docking state is therefore parameterized as
\[
\mathbf{z}=(\mathbf{t},\mathbf{R},\boldsymbol{\tau},\boldsymbol{\chi})
\in
\mathcal{M}
=
\mathbb{R}^3 \times SO(3) \times \mathbb{T}^{m_l} \times \mathbb{T}^{m_s},
\]
where \(\mathbf{t}\in\mathbb{R}^3\) and \(\mathbf{R}\in SO(3)\) denote ligand translation and rotation, \(\boldsymbol{\tau}\in\mathbb{T}^{m_l}\) denotes ligand torsions, and \(\boldsymbol{\chi}\in\mathbb{T}^{m_s}\) denotes flexible pocket side-chain torsions. Thus, docking is modeled on a product manifold with Euclidean, rotational, and toroidal components, and the network jointly predicts ligand pose and pocket side-chain arrangement from \(\mathcal{G}_{\mathrm{pocket}}\). Exact pocket selection details are deferred to Appendix~\ref{app:data-prep}.

\section{Harmonic Torsional Diffusion}

In this section, we firstly build the foundation of the \textit{Harmonic Torsional Diffusion} (\textit{Harmony}) and then derive explicit parametrization and training for the model. All the proofs for the relevant propositions can be found in Appendix \ref{app:proofs}.

\subsection{Variance-Exploding SDE} \label{subsec:ve-sde}

We adopt the variance-exploding (VE) diffusion SDE framework as the probabilistic foundation of our method. In VE diffusion, the forward process perturbs data through pure stochastic diffusion, without any deterministic drift term. In Euclidean space, the forward SDE takes the form
\[
\mathrm{d}\mathbf{z}_\lambda = g(\lambda)\,\mathrm{d}\mathbf{w}_\lambda,
\]
where $\mathbf{w}_\lambda$ is standard Brownian motion and $g(\lambda)$ is a time-dependent diffusion coefficient. Unlike variance-preserving (VP) SDE, VE-SDE diffusion models do not contract the state toward the origin during the forward process. Instead, the variance increases monotonically with time, gradually transforming the data distribution into a prior distribution.

The corresponding marginal perturbation kernel is Gaussian, $p_\lambda(\mathbf{z}_\lambda \mid \mathbf{z}_0)
=
\mathcal{N}\!\left(
\mathbf{z}_0,\,
\sigma(\lambda)^2 \mathbf{I}
\right)$,
where $\sigma(\lambda)$ is the noise scale induced by the diffusion coefficient $g(\lambda)$. In practice, we use an exponential noise schedule, $\sigma(\lambda)=\sigma_{\min}^{1-\lambda}\sigma_{\max}^{\lambda}$.

A key property of the VE forward process is that its density evolution satisfies a heat equation as demonstrated in \Cref{prop:ve-heat}. The proposition above suggests a natural parameterization for torsional variables. Since each torsion angle lives on the circle $\mathbb{T}$ and the VE forward process acts as heat flow, the appropriate basis is given by the eigenfunctions of the Laplace-Beltrami operator on $\mathbb{T}$ which are the Fourier modes. We elaborate on the main idea in the following subsections.

\begin{proposition}[VE-SDE as heat flow, cf.\ \citet{sarkar2026score}] \label{prop:ve-heat}
Let the forward process be defined by the driftless stochastic differential equation
\[
\mathrm{d}\mathbf{z}_\lambda = g(\lambda)\,\mathrm{d}\mathbf{w}_\lambda,
\qquad
\mathbf{z}_0 \sim p_0,
\]
where $\mathbf{w}_\lambda$ is standard Brownian motion and $g(\lambda)$ is a time-dependent diffusion coefficient. Then the density $p_\lambda(\mathbf{z})$ of $\mathbf{z}_\lambda$ satisfies the Fokker--Planck equation
\[
\frac{\partial p_\lambda(\mathbf{z})}{\partial \lambda}
=
\frac{g(\lambda)^2}{2}\,\Delta p_\lambda(\mathbf{z}),
\]
that is, a heat equation with time-dependent diffusivity
\[
\nu(\lambda)=\frac{g(\lambda)^2}{2}.
\]
\end{proposition}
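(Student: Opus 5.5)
The plan is to derive the Fokker--Planck equation for the driftless SDE by testing the law of $\mathbf{z}_\lambda$ against smooth functions, using It\^o's formula. Then we read off the heat equation with diffusivity $\nu(\lambda)=g(\lambda)^2/2$. Fix a test function $\varphi\in C_c^\infty$ on the state space (Euclidean $\mathbb{R}^d$ or, for torsions, the torus $\mathbb{T}^m$, where $\varphi$ is simply smooth and periodic). Since $\mathrm{d}\mathbf{z}_\lambda = g(\lambda)\,\mathrm{d}\mathbf{w}_\lambda$, the quadratic covariation is $\mathrm{d}\langle z^i,z^j\rangle_\lambda = g(\lambda)^2\delta_{ij}\,\mathrm{d}\lambda$. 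It\^o's formula therefore gives
\[
\mathrm{d}\varphi(\mathbf{z}_\lambda) = g(\lambda)\nabla\varphi(\mathbf{z}_\lambda)\cdot \mathrm{d}\mathbf{w}_\lambda + \frac{g(\lambda)^2}{2}\Delta\varphi(\mathbf{z}_\lambda)\,\mathrm{d}\lambda .
\]
The stochastic integral is a true martingale because $\nabla\varphi$ is bounded and $g$ is assumed locally square-integrable. Taking expectations yields
\[
\frac{\mathrm{d}}{\mathrm{d}\lambda}\int \varphi\, p_\lambda \,\mathrm{d}\mathbf{z} = \frac{g(\lambda)^2}{2}\int \Delta\varphi\, p_\lambda\,\mathrm{d}\mathbf{z}.
\]

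Next, integrate by parts twice on the right-hand side to move the Laplacian onto $p_\lambda$. There are no boundary terms, because $\varphi$ has compact support in $\mathbb{R}^d$, or because the torus has no boundary. This gives
\[
\int \varphi\,\partial_\lambda p_\lambda = \int \varphi\,\tfrac{g(\lambda)^2}{2}\Delta p_\lambda
\]
for all $\varphi$, hence the stated PDE holds in the weak (distributional) sense.

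To upgrade the weak statement to a classical one, use the explicit solution. Since the noise is additive and independent of the state, $\mathbf{z}_\lambda = \mathbf{z}_0 + \int_0^\lambda g(s)\,\mathrm{d}\mathbf{w}_s$. So $p_\lambda = p_0 * \mathcal{N}(0,\sigma(\lambda)^2\mathbf{I})$ with $\sigma(\lambda)^2=\int_0^\lambda g(s)^2\,\mathrm{d}s$. On the torus, the Gaussian is replaced by its wrapped version, which in Fourier modes has coefficients $e^{-k^2\sigma^2/2}$. For $\lambda>0$ this is a smooth function. A direct differentiation of the Gaussian kernel confirms the equation: $\partial_\lambda$ of the kernel equals $\tfrac{1}{2}\dot{(\sigma^2)}\Delta$ of the kernel, and $\dot{(\sigma^2)}=g^2$. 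Differentiation passes under the convolution because $p_0$ is a probability measure and the kernel's derivatives are bounded. This second computation doubles as an independent check of the result.

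The main obstacle is regularity rather than algebra. The initial law $p_0$ may be singular (for example, a finite mixture of point masses from data), so the identity should be claimed for $\lambda>0$, where the smoothing by the heat kernel makes $p_\lambda$ $C^\infty$. I would state the weak form for all $\lambda\ge 0$ and the classical form for $\lambda>0$. I would also note that the same argument applies verbatim on $\mathbb{T}^m$ with the Laplace--Beltrami operator, which is the setting used in the next subsections.
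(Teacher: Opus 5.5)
Your proposal is correct and follows the paper's argument: the paper states that the generator of $\mathrm{d}\mathbf{z}_\lambda = g(\lambda)\,\mathrm{d}\mathbf{w}_\lambda$ is $\mathcal{L}_\lambda=\tfrac{g(\lambda)^2}{2}\Delta$ and uses self-adjointness of $\Delta$ to get $\mathcal{L}_\lambda^\ast=\mathcal{L}_\lambda$, which is what your It\^o computation and double integration by parts establish. Your Gaussian-convolution upgrade (with $\sigma(\lambda)^2=\int_0^\lambda g(s)^2\,\mathrm{d}s$) and the weak-versus-classical distinction for singular $p_0$ add rigor the paper omits here; the paper does the analogous periodized-heat-kernel computation only in its proof of Proposition~\ref{prop:wrapped-heat}.
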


\subsection{Product Space Diffusion} \label{subsec:product-diff}
We model flexible docking with a variance-exploding (VE) diffusion process on the product space of ligand rigid-body motion and periodic torsional variables. Since the protein backbone is kept fixed and only pocket side chains are allowed to move, the forward process as defined according to $\mathbf{z}$, perturbs the translation, rotation, ligand torsions and pocket side-chain torsions independently at noise level $\lambda \in [0,1]$:
\[
\begin{aligned}
q_\lambda(\mathbf{z}_\lambda \mid \mathbf{z}_0)
&=
q_\lambda^{\mathrm{tr}}(\mathbf{t}_\lambda \mid \mathbf{t}_0)\,
q_\lambda^{\mathrm{rot}}(\mathbf{R}_\lambda \mid \mathbf{R}_0) \\
&\quad
q_\lambda^{\mathrm{tor}}(\boldsymbol{\tau}_\lambda \mid \boldsymbol{\tau}_0)\,
q_\lambda^{\mathrm{sc}}(\boldsymbol{\chi}_\lambda \mid \boldsymbol{\chi}_0).
\end{aligned}
\]

where each component is corrupted with an exponentially increasing noise scale
\[
\sigma_u(\lambda)=\sigma_{u,\min}^{1-\lambda}\sigma_{u,\max}^{\lambda},
\qquad
u \in \{\mathrm{tr},\mathrm{rot},\mathrm{tor},\mathrm{sc}\}.
\]

For ligand translation, we apply Gaussian perturbations in Euclidean space,
\[
\mathbf{t}_\lambda = \mathbf{t}_0 + \sigma_{\mathrm{tr}}(\lambda)\,\boldsymbol{\epsilon}_{\mathrm{tr}},
\qquad
\boldsymbol{\epsilon}_{\mathrm{tr}} \sim \mathcal{N}(\mathbf{0}, \mathbf{I}_3).
\]
For ligand rotation, we perturb the clean orientation on $SO(3)$ with isotropic rotational noise. In practice, this perturbation is represented in axis-angle form and applied through the exponential map,
\[
\mathbf{R}_\lambda = \mathrm{Exp}(\boldsymbol{\omega}_\lambda)\,\mathbf{R}_0,
\qquad
\boldsymbol{\omega}_\lambda \sim \mathrm{IGSO}(3;\sigma_{\mathrm{rot}}(\lambda)).
\]
For ligand torsions and pocket side-chain torsions, we use wrapped Gaussian perturbations on the torus,
\[
\boldsymbol{\tau}_\lambda
=
\mathrm{wrap}\!\left(
\boldsymbol{\tau}_0 + \sigma_{\mathrm{tor}}(\lambda)\,\boldsymbol{\epsilon}_{\mathrm{tor}}
\right),
\qquad
\boldsymbol{\epsilon}_{\mathrm{tor}} \sim \mathcal{N}(\mathbf{0}, \mathbf{I}),
\]
\[
\boldsymbol{\chi}_\lambda
=
\mathrm{wrap}\!\left(
\boldsymbol{\chi}_0 + \sigma_{\mathrm{sc}}(\lambda)\,\boldsymbol{\epsilon}_{\mathrm{sc}}
\right),
\qquad
\boldsymbol{\epsilon}_{\mathrm{sc}} \sim \mathcal{N}(\mathbf{0}, \mathbf{I}),
\]
where $\mathrm{wrap}(\cdot)$ maps angles back to $(-\pi,\pi]$.

The network is trained to predict the score of each factor of the forward noising distribution. Reverse-time sampling then jointly denoises these components to recover the bound ligand pose together with the binding-pocket side-chain arrangement according to backward process \citep{anderson1982reverse}.

\subsection{Harmonic Parametrization of Torsional Angles} \label{subsec:harmony}

Overall design of \textit{Harmony} can be seen in \Cref{fig:harmony-main}. The VE-SDE viewpoint above motivates a natural representation for torsional variables. Since ligand torsions $\boldsymbol{\tau}$ and side-chain torsions $\boldsymbol{\chi}$ take values on a torus, each individual torsional degree of freedom lies on the circle $\mathbb{T}$. Two facts make this geometry tractable.

\begin{proposition}[Fourier eigenbasis of the circle, \citet{stein2003fourier}] \label{prop:fourier-eigenbasis}
Let $\Delta_{\mathbb{T}}$ denote the Laplace-Beltrami operator on the circle $\mathbb{T}$. Then, for every integer $k \ge 1$, the functions $\cos(k\vartheta)$ and $\sin(k\vartheta)$ are eigenfunctions of $\Delta_{\mathbb{T}}$ with eigenvalue $-k^2$, i.e.
\[
\begin{aligned}
\Delta_{\mathbb{T}}\cos(k\vartheta) &= -k^2\cos(k\vartheta), \\
\Delta_{\mathbb{T}}\sin(k\vartheta) &= -k^2\sin(k\vartheta).
\end{aligned}
\]
Consequently, for any finite Fourier expansion
\[
\psi(\vartheta)=\sum_{k=1}^K \big[a_k\cos(k\vartheta)+b_k\sin(k\vartheta)\big],
\]
the heat semigroup generated by $\Delta_{\mathbb{T}}$ acts diagonally:
\[
\begin{aligned}
\exp\!\left(\frac{\sigma^2}{2}\Delta_{\mathbb{T}}\right)\psi(\vartheta)
&=
\sum_{k=1}^K
\exp\!\left(-\frac{1}{2}k^2\sigma^2\right) \\
&\quad \big[a_k\cos(k\vartheta)+b_k\sin(k\vartheta)\big].
\end{aligned}
\]
\end{proposition}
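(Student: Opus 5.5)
We identify the circle $\mathbb{T}$ with $\mathbb{R}/2\pi\mathbb{Z}$ and use the coordinate $\vartheta$. In this chart the flat metric has constant coefficient $1$, so the Laplace--Beltrami operator reduces to $\Delta_{\mathbb{T}} = \frac{\mathrm{d}^2}{\mathrm{d}\vartheta^2}$ acting on smooth $2\pi$-periodic functions. The first claim is then a direct differentiation. We have $\frac{\mathrm{d}}{\mathrm{d}\vartheta}\cos(k\vartheta) = -k\sin(k\vartheta)$, and differentiating again gives $-k^2\cos(k\vartheta)$; the computation for $\sin(k\vartheta)$ is symmetric. Since $k$ is an integer, both functions are genuinely $2\pi$-periodic, hence well defined on $\mathbb{T}$. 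This is the only place where the integrality of $k$ enters.

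For the semigroup statement, we first have to say what $\exp(\frac{\sigma^2}{2}\Delta_{\mathbb{T}})$ means. One option is to define it spectrally: $\Delta_{\mathbb{T}}$ is self-adjoint and nonpositive on $L^2(\mathbb{T})$, with orthonormal eigenbasis $\{1,\cos k\vartheta,\sin k\vartheta\}$. The equivalent option is to define it as the solution operator of the heat equation $\partial_s u = \frac{1}{2}\Delta_{\mathbb{T}} u$ at time $s=\sigma^2$ with initial datum $\psi$. This time-change view matches \Cref{prop:ve-heat}, where the accumulated diffusivity $\int_0^\lambda g^2/2$ plays the role of $\sigma^2/2$.

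With the spectral definition, the result follows from linearity and the eigenrelations above. Each mode is multiplied by $\exp(\frac{\sigma^2}{2}\cdot(-k^2)) = \exp(-\frac{1}{2}k^2\sigma^2)$, and because the expansion of $\psi$ is finite, no convergence questions arise.

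To make the argument independent of the chosen definition, we can also verify the PDE directly. Take the candidate
\[
u(s,\vartheta) = \sum_{k=1}^K e^{-k^2 s/2}\big[a_k\cos(k\vartheta)+b_k\sin(k\vartheta)\big].
\]
Differentiating term by term, $\partial_s u = \frac{1}{2}\partial_\vartheta^2 u$, and $u(0,\cdot)=\psi$. Uniqueness of periodic heat solutions then identifies $u(\sigma^2,\cdot)$ with the semigroup applied to $\psi$. Uniqueness follows from the energy estimate: for the difference $w$ of two solutions, $\frac{\mathrm{d}}{\mathrm{d}s}\int w^2 = -\int (\partial_\vartheta w)^2 \le 0$, obtained by integrating by parts on the compact manifold $\mathbb{T}$, which has no boundary terms.

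The main obstacle is not computational but definitional: fixing a meaning for the operator exponential and matching the time scaling. In particular, the factor $\frac{\sigma^2}{2}$ must correspond to a wrapped Gaussian of variance $\sigma^2$, consistent with the wrapped perturbations of \Cref{subsec:product-diff}. As a cross-check, one can compute the Fourier coefficients of the wrapped normal density, which equal the characteristic function $e^{-k^2\sigma^2/2}$. Convolving $\psi$ with this kernel reproduces the same diagonal damping.
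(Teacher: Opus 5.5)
Your proof is correct and follows the paper's argument. You identify $\Delta_{\mathbb{T}}$ with $\partial_\vartheta^2$ in the angular chart, differentiate twice to get the eigenrelations, then let $\exp(\tfrac{\sigma^2}{2}\Delta_{\mathbb{T}})$ act modewise by $\exp(-\tfrac12 k^2\sigma^2)$ and conclude by linearity.

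Your PDE-uniqueness argument is also right: the energy identity $\tfrac{\mathrm{d}}{\mathrm{d}s}\int w^2 = -\int(\partial_\vartheta w)^2$ holds with the factor $\tfrac12$ in the heat equation. Together with the wrapped-normal characteristic-function check, it makes explicit the meaning of the operator exponential that the paper leaves implicit. The only slip is calling $\{1,\cos k\vartheta,\sin k\vartheta\}$ orthonormal: it is orthogonal, and becomes orthonormal only after normalization, which does not affect the argument.
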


\begin{proposition}[Wrapped Gaussian noising is heat flow on $\mathbb{T}$, {\citet[\S 3.5.7]{mardia2009directional}}]
\label{prop:wrapped-heat}
Let $\vartheta_\lambda = \mathrm{wrap}(\vartheta_0 + \sigma(\lambda)\,\epsilon)$ with $\epsilon \sim \mathcal{N}(0,1)$. Then the conditional density of $\vartheta_\lambda$ given $\vartheta_0$ is the wrapped normal
\begin{equation*}
p_\lambda(\vartheta \mid \vartheta_0) = \sum_{n \in \mathbb{Z}} \frac{1}{\sqrt{2\pi\sigma^2(\lambda)}} \exp\!\left(-\frac{(\vartheta - \vartheta_0 - 2\pi n)^2}{2\sigma^2(\lambda)}\right),
\end{equation*}
which coincides with the heat kernel on $\mathbb{T}$ at time $\sigma^2(\lambda)/2$. Equivalently, $p_\lambda(\,\cdot\mid\vartheta_0) = \exp\!\left(\tfrac{\sigma^2(\lambda)}{2}\Delta_\mathbb{T}\right)\delta_{\vartheta_0}$, so the marginal density satisfies $\partial_\lambda p_\lambda = \tfrac{g(\lambda)^2}{2}\Delta_\mathbb{T}\, p_\lambda$.
\end{proposition}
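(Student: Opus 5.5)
The plan has three parts: compute the law of $\vartheta_\lambda$ by pushing the Gaussian forward through the wrap map, identify that law with the heat kernel on $\mathbb{T}$ through Fourier series, and then obtain the heat equation in $\lambda$.

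\emph{Wrapped normal density.} Write $\tilde\vartheta_\lambda=\vartheta_0+\sigma(\lambda)\epsilon\sim\mathcal{N}(\vartheta_0,\sigma^2(\lambda))$ on $\mathbb{R}$. The map $\mathrm{wrap}:\mathbb{R}\to(-\pi,\pi]$ is the quotient by $2\pi\mathbb{Z}$. For any Borel set $A\subset(-\pi,\pi]$ we have
\[
P(\vartheta_\lambda\in A)=\sum_{n\in\mathbb{Z}}P(\tilde\vartheta_\lambda\in A+2\pi n).
\]
The translated copies $A+2\pi n$ are disjoint and cover the preimage of $A$, so this is immediate. Substituting the Gaussian density and exchanging sum and integral by Tonelli gives the stated periodized density.

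\emph{Identification with the heat kernel.} Let $s=\sigma^2(\lambda)/2$. I would compute the Fourier coefficients of the wrapped normal. Periodization turns the integral over $(-\pi,\pi]$ into an integral over $\mathbb{R}$:
\[
\int_{-\pi}^{\pi}p_\lambda(\vartheta\mid\vartheta_0)e^{-ik\vartheta}\,d\vartheta=\int_{\mathbb{R}}\mathcal{N}(x;\vartheta_0,\sigma^2)e^{-ikx}\,dx=e^{-ik\vartheta_0}e^{-k^2\sigma^2/2}.
\]
The last equality is the Gaussian characteristic function. Hence
\[
p_\lambda(\vartheta\mid\vartheta_0)=\frac{1}{2\pi}\sum_{k}e^{-k^2 s}e^{ik(\vartheta-\vartheta_0)}.
\]
This series is exactly the heat kernel on $\mathbb{T}$ at time $s$. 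The reason is that, by \Cref{prop:fourier-eigenbasis}, $e^{ik\vartheta}$ is an eigenfunction of $\Delta_\mathbb{T}$ with eigenvalue $-k^2$. Together with the Fourier expansion of $\delta_{\vartheta_0}$, namely $\frac{1}{2\pi}\sum_k e^{ik(\vartheta-\vartheta_0)}$, this yields $p_\lambda(\cdot\mid\vartheta_0)=\exp(s\Delta_\mathbb{T})\delta_{\vartheta_0}$, with the semigroup defined diagonally on the Fourier modes. This is the Poisson-summation step. It is the main point requiring care, since it needs convergence of the Gaussian Fourier series. For $s>0$ that convergence is absolute and uniform, which also justifies all termwise operations below.

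\emph{Heat equation.} Differentiate the Fourier series termwise in $\lambda$, which is legitimate by uniform convergence of the differentiated series when $\sigma(\lambda)>0$. Each mode satisfies
\[
\partial_\lambda e^{-k^2\sigma^2/2}=-k^2\,\tfrac{1}{2}\tfrac{d\sigma^2}{d\lambda}\,e^{-k^2\sigma^2/2}=\tfrac{1}{2}\tfrac{d\sigma^2}{d\lambda}\,\Delta_\mathbb{T}\bigl(e^{-k^2\sigma^2/2}e^{ik(\vartheta-\vartheta_0)}\bigr).
\]
So $\partial_\lambda p=\tfrac{1}{2}\tfrac{d\sigma^2}{d\lambda}\Delta_\mathbb{T}p$. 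The forward process $d\vartheta=g\,dw$ has $\sigma^2(\lambda)=\int_0^\lambda g^2$, as in \Cref{prop:ve-heat}, so $\tfrac{d\sigma^2}{d\lambda}=g^2$ and we get the factor $g(\lambda)^2/2$.

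For the marginal $p_\lambda=\int p_\lambda(\cdot\mid\vartheta_0)\,p_0(d\vartheta_0)$, I would differentiate under the integral. This is allowed because the kernel and its derivatives are bounded uniformly in $\vartheta_0$ for $\sigma$ bounded away from $0$.

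One subtlety to flag: the paper uses an exponential schedule with $\sigma(0)=\sigma_{\min}>0$, so $\sigma^2(\lambda)$ differs from $\int_0^\lambda g^2$ by a constant. In that case I would take $g^2:=d\sigma^2/d\lambda$ as the definition of $g$. The identity then holds verbatim, and it is the relation $g^2=d\sigma^2/d\lambda$, not the absolute offset, that the computation uses.
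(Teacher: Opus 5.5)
Your proposal is correct and takes the same route as the paper's proof: periodize the Gaussian under the quotient map, identify the result with the heat kernel on $\mathbb{T}$ at time $\sigma^2(\lambda)/2$, and differentiate in $\lambda$ using $(\sigma^2)'(\lambda)=g(\lambda)^2$. The paper asserts the heat-kernel identification without proof and only differentiates the conditional kernel; you verify the identification through the Gaussian characteristic function and pass to the marginal explicitly, and setting $g^2 := d\sigma^2/d\lambda$ when $\sigma_{\min}>0$ resolves the inconsistency behind the paper's conditional ``if $\sigma^2(\lambda)=\int_0^\lambda g(s)^2\,ds$''.
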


Let $\vartheta$ denote one component of either $\boldsymbol{\tau}$ or $\boldsymbol{\chi}$ associated with a rotatable bond $e$. For each such bond, the message-passing backbone produces a local bond representation $\mathbf{u}_e$, from which \textit{Harmony} predicts harmonic coefficients via
\[
[a_{e,1}, b_{e,1}, \dots, a_{e,K}, b_{e,K}]
=
\mathbf{g}_\theta(\mathbf{u}_e),
\]
where $K$ is the number of retained Fourier modes. These coefficients define a latent harmonic function on the circle,
\[
\psi_e(\vartheta)
=
\sum_{k=1}^K
\left[
a_{e,k}\cos(k\vartheta)
+
b_{e,k}\sin(k\vartheta)
\right].
\]

By Propositions~\ref{prop:ve-heat}-\ref{prop:wrapped-heat}, the VE forward process on each torsional coordinate acts as heat flow on $\mathbb{T}$, and Proposition~\ref{prop:fourier-eigenbasis} diagonalizes this action in the Fourier basis. The smoothed harmonic function at noise level $\lambda$ is therefore

\begin{equation}
\begin{split}
\psi_{e,\lambda}(\vartheta)
&=
\sum_{k=1}^K
\exp\!\left(-\frac{1}{2}k^2\sigma^2(\lambda)\right) \\
&\quad \left[
a_{e,k}\cos(k\vartheta)
+
b_{e,k}\sin(k\vartheta)
\right].
\end{split}
\label{eq:psi}
\end{equation}
Thus, VE diffusion induces a frequency-dependent damping, where high-frequency angular structure is suppressed at large noise, while finer modes gradually re-emerge as $\lambda \to 0$ as in \Cref{fig:damping}.

\textit{Harmony} predicts the torsional score field analytically by differentiating the heat-smoothed harmonic expansion with respect to the angle:
\[
\begin{aligned}
s_e(\vartheta,\lambda)
&=
\partial_\vartheta \psi_{e,\lambda}(\vartheta) \\
&=
\sum_{k=1}^K
k\exp\!\left(-\frac{1}{2}k^2\sigma_u^2(\lambda)\right) \\
&\quad \left[
-a_{e,k}\sin(k\vartheta)
+
b_{e,k}\cos(k\vartheta)
\right].
\end{aligned}
\]

Applying this construction independently to all ligand and side-chain rotatable bonds yields the torsional score components associated with $\boldsymbol{\tau}$ and $\boldsymbol{\chi}$.

This parameterization is crucial because the model does not need to learn an arbitrary score field from scratch. Instead, the dependence on the VE noise level is built in explicitly through the heat-semigroup damping factor $\exp(-\frac{1}{2}k^2\sigma_u^2(\lambda))$, which is exactly how the forward process transforms Fourier modes on $\mathbb{T}$. As a result, the representation remains faithful to the noising process across all noise levels, preserves periodicity by construction and naturally captures multimodal torsional landscapes such as ligand conformers and side-chain rotamers.

\begin{figure}[t]
  \centering
  \includegraphics[width=\linewidth]{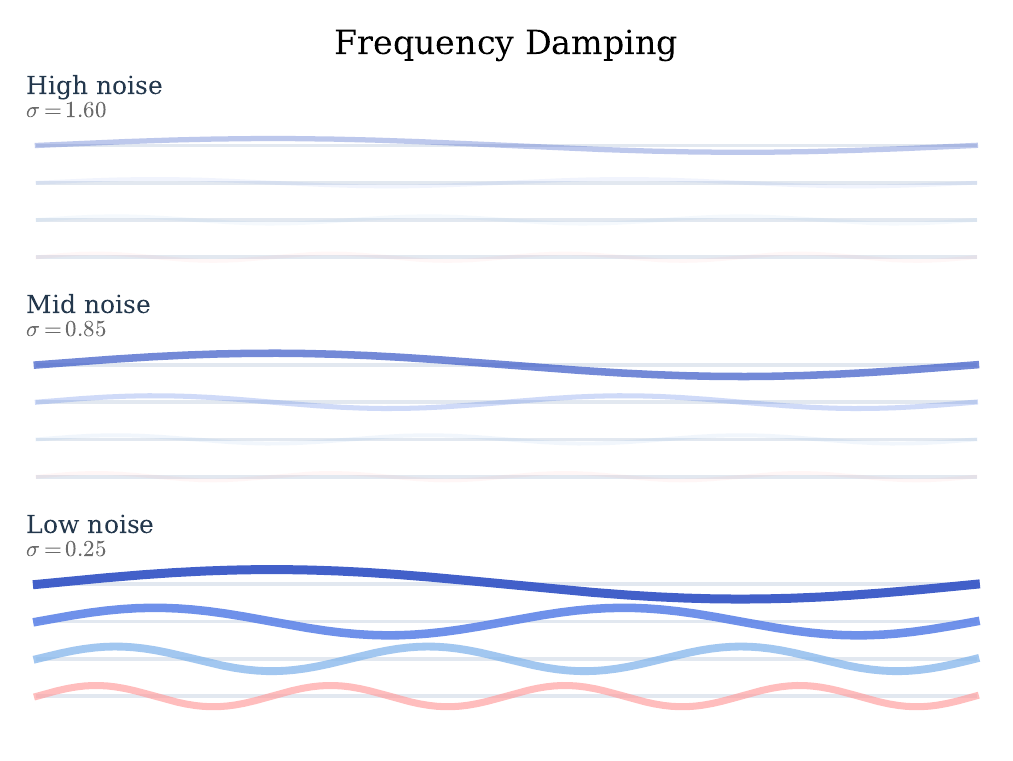}
  \caption{
    Frequency damping effect of VE noising. As noise level increases, high-frequency components saturate faster, but low-frequency modes remain. Thus, \textit{Harmony} learns torsional motion naturally by filtering different scales of molecular motion during training.
  }
  \label{fig:damping}
\end{figure}

\subsection{Confidence and Relaxation Modules}
\label{subsec:conf-aff-relax}

\paragraph{Confidence Head.}
We use a lightweight confidence head attached to the shared docking encoder rather than a separate reranking model. The head predicts a scalar confidence score for each sampled pose and is supervised with protein-ligand interface lDDT (PLI-lDDT) \citep{lu2024dynamicbind}, which measures local receptor--ligand contact accuracy. To emphasize near-final geometries, confidence is trained with a time-weighted regression loss that upweights low-noise samples. More details in \Cref{app:model}.

\paragraph{Relaxation.}
We do not apply post hoc relaxation in the present study. This choice allows us to isolate the quality of the manifold docking model itself and evaluate the generated complexes directly. In principle, \textit{Harmony} can be combined with any downstream relaxation procedure, including separately trained relaxation models similar to FlexDock \citep{corso2025composing} or external tools such as GNINA \citep{mcnutt2021gnina}.\looseness=-1

\paragraph{Model Architecture and Training.}
\textit{Harmony} uses an e3nn \citep{geiger2022e3nn} heterogeneous $SE(3)$-equivariant graph neural network over the ligand atom graph, the receptor $C_\alpha$ graph, and the receptor atom graph in the binding pocket. We train the model with the standard diffusion score-matching objective as in \citep{plainer2023diffdock}. Training uses OpenEquivariance \citep{bharadwaj2025efficient} kernels for Clebsch-Gordan tensor products, reducing runtime from roughly $4.5$ days to $2$ days on $8$ H100 80GB GPUs (see Appendix \ref{app:model}).

\section{Experiments}

\begin{table}[t]
  \caption{
    Top-1 PDBBind ESMFold docking performance. \textit{Harmony} samples 10 complexes and selects the one with the highest confidence. Best results are shown in bold. N.A. indicates missing results for baselines whose output structures were not available. Runtime for \textit{Harmony} is reported both for the standard Clebsch--Gordan tensor product implementation and for the OpenEquivariance CUDA-kernel version used in our base model.
  }
  \label{tab:docking_results}
  \centering
  \small
  \setlength{\tabcolsep}{3.5pt}
  \renewcommand{\arraystretch}{1.1}
  \begin{tabular}{lcccc}
    \toprule
    \multirow{2}{*}{Method}
      & \multicolumn{2}{c}{Ligand RMSD}
      & All-Atom
      & \multirow{2}{*}{Runtime (s)} \\
    \cmidrule(lr){2-3} \cmidrule(lr){4-4}
      & \shortstack{\% $<2$ \AA{} \\ $\uparrow$}
      & \shortstack{Med. \AA{} \\ $\downarrow$}
      & \shortstack{\% $<1$ \AA{} \\ $\uparrow$}
      & \\
    \midrule
    SMINA (rigid)    & 6.6  & 7.7 & N.A. & 258  \\
    SMINA            & 3.6  & 7.3 & 5.2  & 1914 \\
    GNINA (rigid)    & 6.7  & 7.1 & N.A. & 260  \\
    GNINA            & 8.4  & 7.9 & 4.5  & 1575 \\
    \midrule
    DiffDock-Pocket  & 41.8 & 2.5 & 32.4 & 17   \\
    ReDock           & 39.0 & 2.5 & 39.8 & 15   \\
    FlexDock         & 39.7 & 2.5 & 41.7 & 11   \\
    \midrule
    \textit{Harmony} & \textbf{64.2} & \textbf{1.9} & \textbf{47.3} & \textbf{5 (11)} \\
    \bottomrule
  \end{tabular}
\end{table}

\subsection{Experimental Setup and Results} \label{subsec:exp-main}

\paragraph{Benchmarks.}
We evaluate \textit{Harmony} using the widely established PDBBind benchmark \citet{liu2017forging}, relying on structures predicted by ESMFold \citet{lin2022language} to approximate the distribution of unbound protein conformations. Given the focus on flexible docking, the evaluation considers the accuracy of both ligand placement and binding pocket geometry. All of the reported metrics are for top-1 samples ranked by confidence and do not use oracle information. More details on datasets are in Appendix \ref{app:data}.\looseness=-1

\begin{figure}[t]
  \centering
  \includegraphics[width=\linewidth]{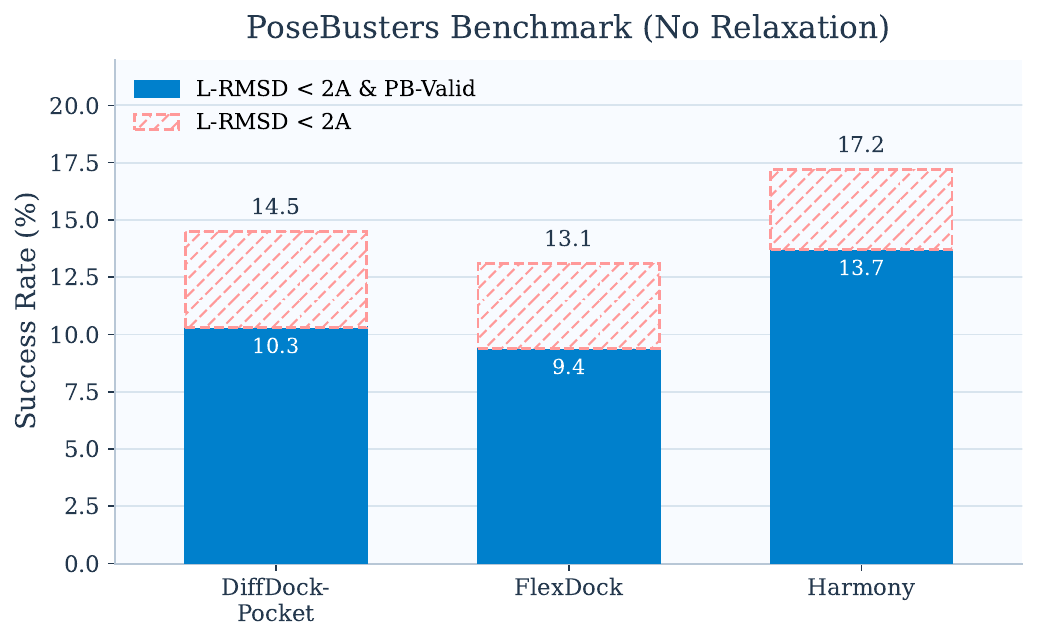}
  \caption{PoseBusters results with \textbf{no relaxation}.}
  \label{fig:posebusters}
\end{figure}

\begin{figure*}[t]
  \centering
  \includegraphics[width=\textwidth]{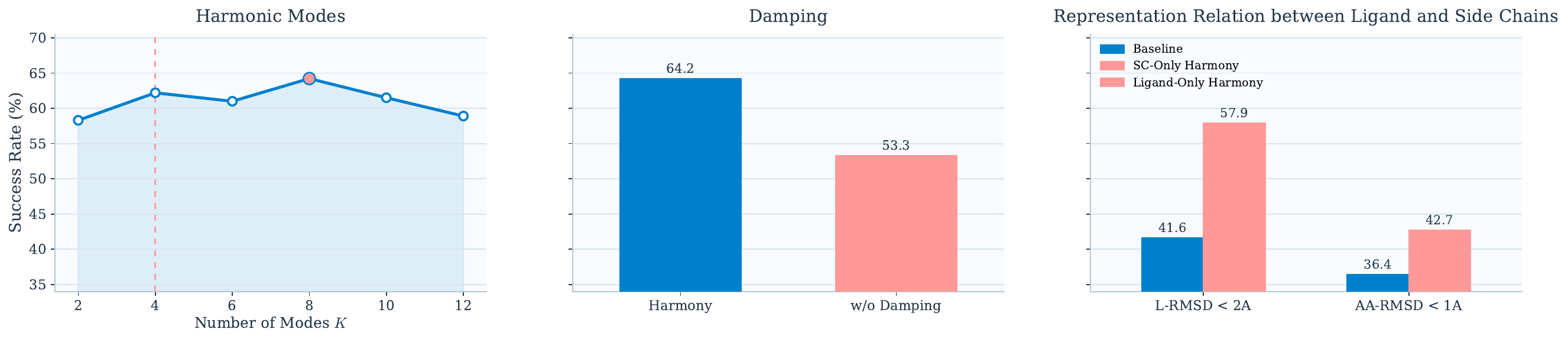}
  \caption{
    Ablations of \textit{Harmony}. \textbf{Left}: Ablation of the number of harmonic modes $K$ on PDBBind success rate. A smaller number of modes yields the best performance. \textbf{Middle}: Ablation of the VE heat-flow damping term. Removing damping degrades docking success, showing that explicit noise-dependent suppression of high-frequency torsional modes is important. \textbf{Right}: Applying harmonic learning only to ligand torsions improves side-chain modeling, while applying it only to side chains improves ligand success rate. This indicates that \textit{Harmony} improves the joint representation used for flexible docking.
  }
  \label{fig:harmony-three-ablations}
\end{figure*}

To ensure a consistent comparison, predicted pocket structures are first aligned to their corresponding reference conformations. Following this alignment, heavy-atom RMSD is computed for both the ligand and the pocket atoms. Performance for ligand predictions is summarized using the median RMSD as well as the proportion of cases below a 2\AA{} threshold. For pocket atoms, accuracy is reported as the fraction of predictions achieving an RMSD below 1\AA{}. More details on evaluation metrics can be found in Appendix \ref{app:eval-metrics}.
Additionally, we evaluate on the PoseBusters \citep{buttenschoen2024posebusters} set which includes a wide range of filters aimed to ensure the physical validity of generated poses. Exact stratification between different validity checks can be seen in Appendix \ref{app:strat-posebusters}\looseness=-1

\paragraph{Baselines.}
On the PDBBind benchmark, \textit{Harmony} is evaluated against both classical and learning-based baselines. We include established search-based docking methods such as SMINA \citep{koes2013lessons} and GNINA \citep{mcnutt2021gnina}, as well as recent flexible, pocket-level machine learning approaches including DiffDock-Pocket \citet{plainer2023diffdock}, ReDock \citet{huang2024re} and FlexDock \citet{corso2025composing}.

A notable consideration in these comparisons is the use of post-processing or relaxation steps. Several methods incorporate external energy minimization procedures which are often implemented via tools such as OpenMM or rely on auxiliary learned models to refine predicted complexes as in FlexDock. While such procedures can improve geometric plausibility by reducing steric clashes and energy artifacts, they may also obscure deficiencies in the underlying generative model by correcting both inter- and intramolecular inconsistencies.

To isolate the intrinsic performance of flexible docking models, we conduct additional comparisons on the PoseBusters benchmark using base versions of DiffDock-Pocket and recent FlexDock, explicitly excluding any relaxation or post-processing components. This allows for a more direct assessment of the generative models themselves, independent of downstream correction mechanisms.

\paragraph{Main Results.}
\Cref{tab:docking_results} summarizes the performance of prior approaches alongside \textit{Harmony}, where consistent improvements are observed across multiple evaluation metrics. \textit{Harmony} adopts the same architectural design and training protocol as DiffDock-Pocket and FlexDock. Under this setting, we observe gains in both ligand placement accuracy, measured by the fraction of predictions with RMSD $<$ 2\AA{} with the increase from 39.7\% to 64.2\%, and pocket reconstruction quality where AA-RMSD $<$ 1\AA{} increases from 41.7\% to 47.3\%.\looseness=-1

An important distinction is that FlexDock relies on an additional relaxation module, which must be trained separately and applied as a post-processing step. In contrast, the model proposed here directly generates final complex structures without any auxiliary refinement, demonstrating that the improvements stem from the core generative framework rather than downstream correction procedures.
On PoseBusters (\Cref{fig:posebusters}), our method outperforms direct pocket-based flexible competitors DiffDock-Pocket and FlexDock.\looseness=-1

Overall, these results suggest that the Harmonic Torsional Diffusion framework introduces effective inductive biases for jointly modeling ligand placement and side-chain conformations, leading to improved structural accuracy.

\begin{figure*}[t]
  \centering
  \begin{tabular}{cc}
    \textbf{Baseline} & \textbf{\textit{Harmony}} \\[0.4em]

    \includegraphics[width=0.44\textwidth]{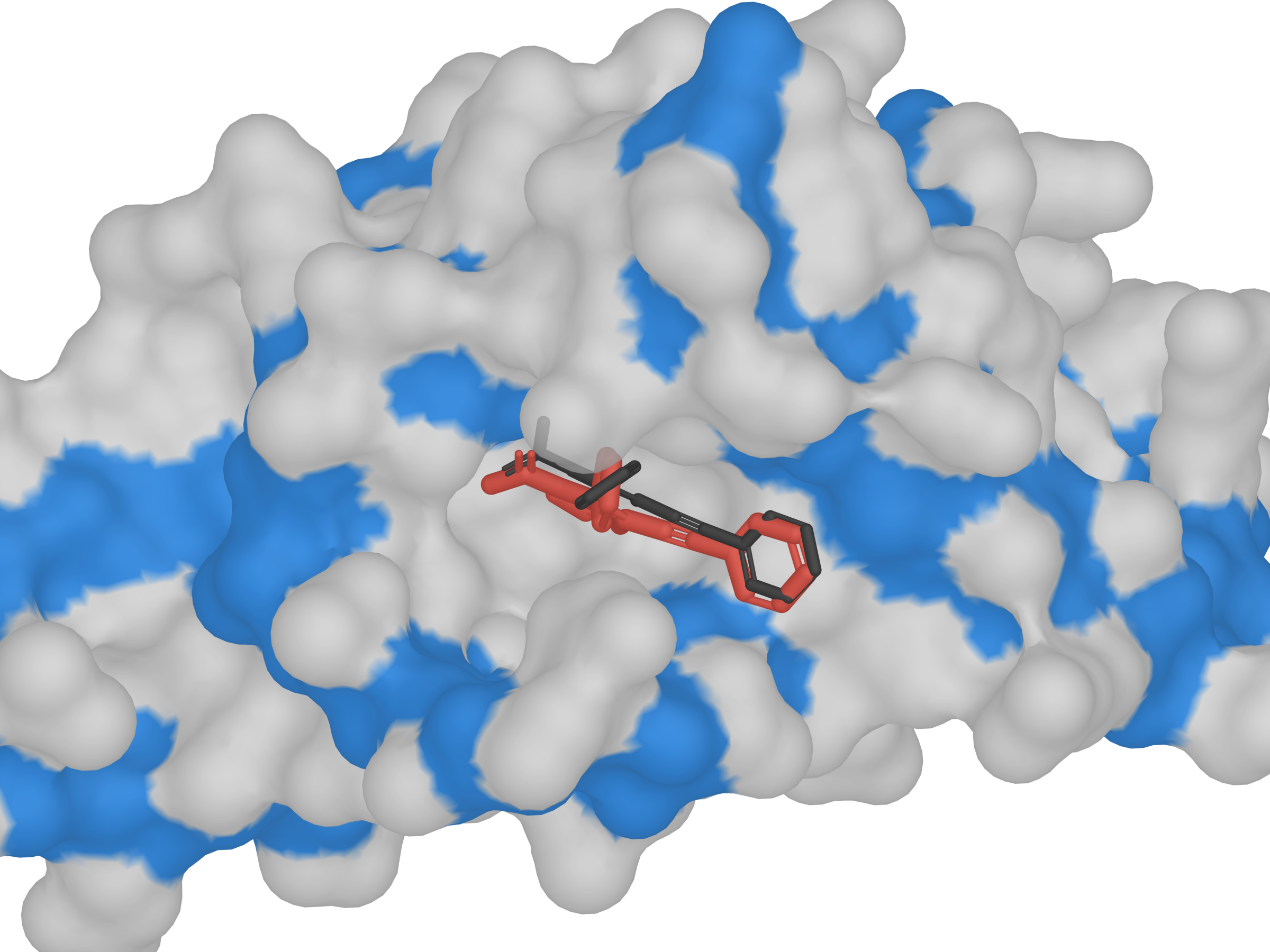} &
    \includegraphics[width=0.44\textwidth]{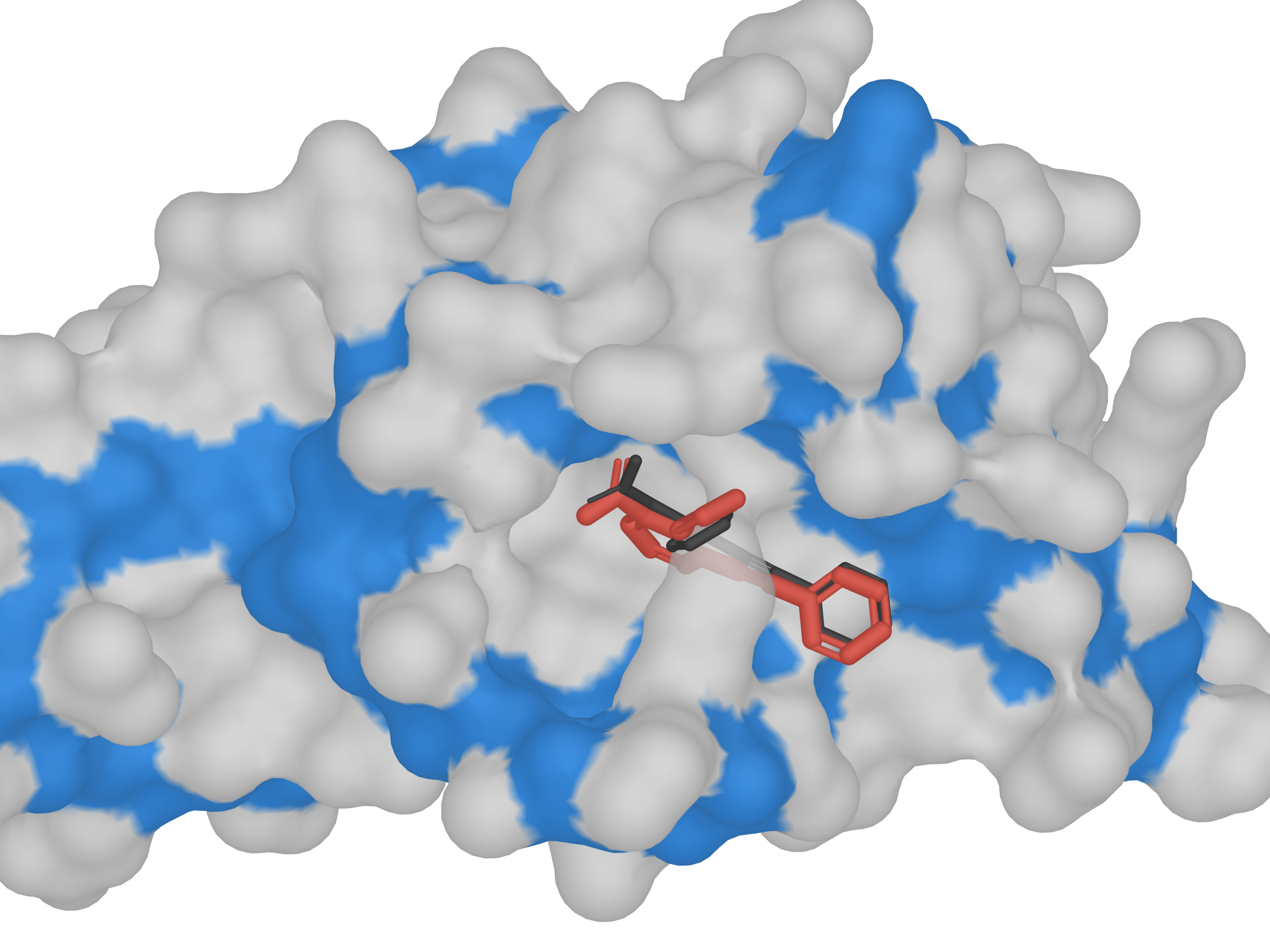} \\
    \footnotesize \textbf{6NPP}: L-RMSD \textbf{1.35 \AA}, AA-RMSD \textbf{1.79 \AA} &
    \footnotesize \textbf{6NPP}: L-RMSD \textbf{0.72 \AA}, AA-RMSD \textbf{1.46 \AA} \\[0.8em]

    \includegraphics[width=0.44\textwidth]{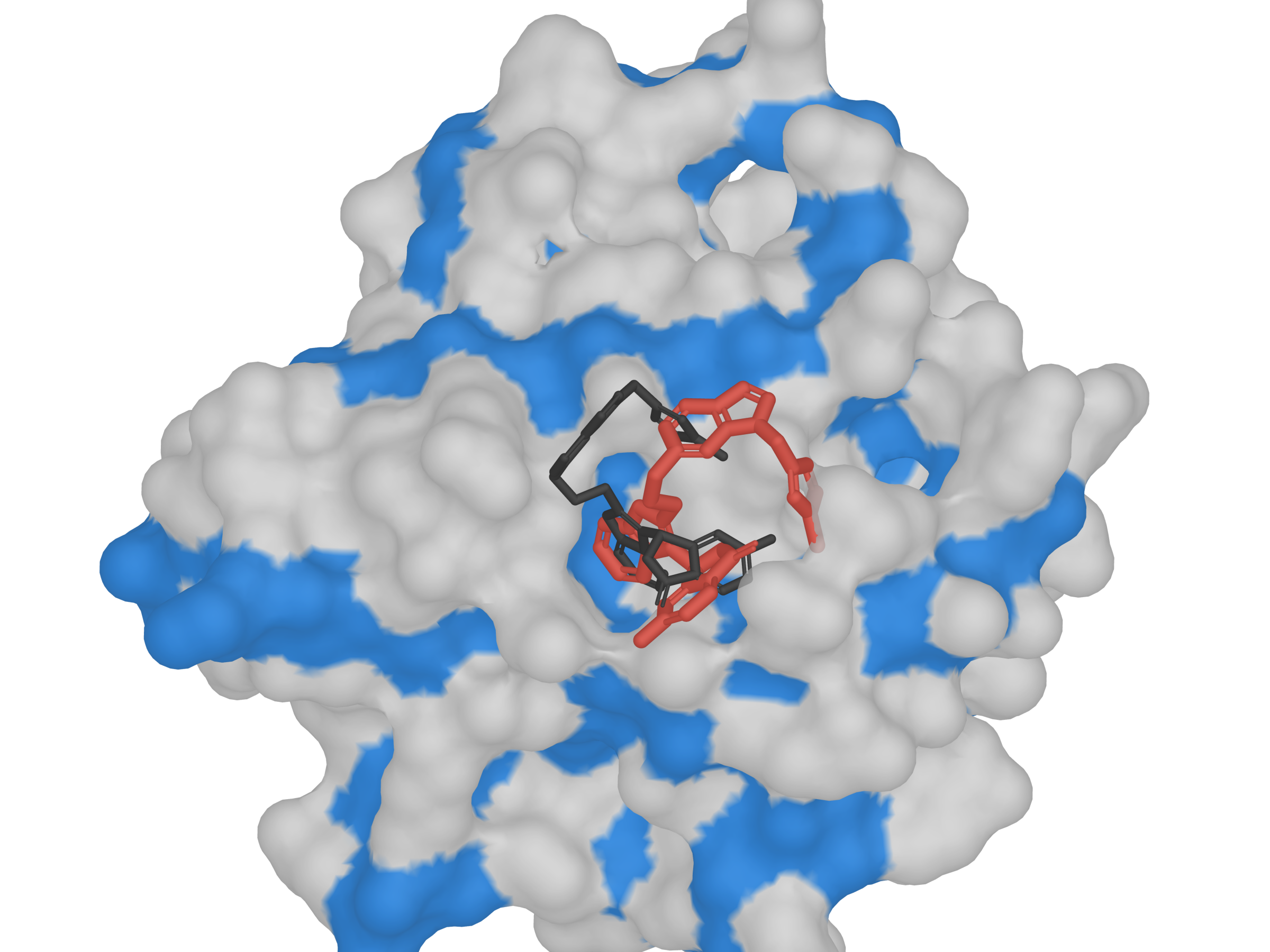} &
    \includegraphics[width=0.44\textwidth]{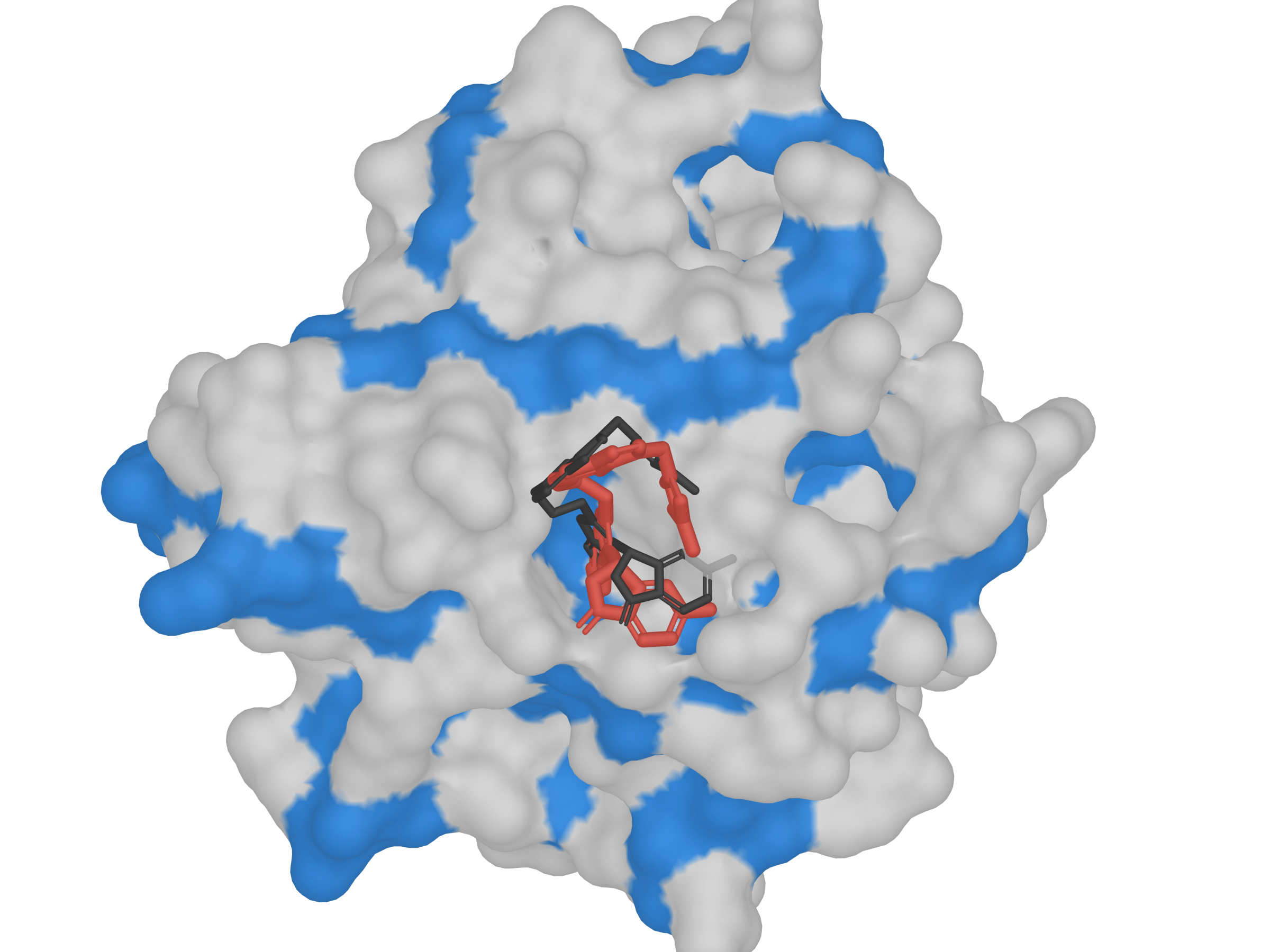} \\
    \footnotesize \textbf{6GJ8}: L-RMSD \textbf{5.05 \AA}, AA-RMSD \textbf{2.05 \AA} &
    \footnotesize \textbf{6GJ8}: L-RMSD \textbf{1.90 \AA}, AA-RMSD \textbf{1.95 \AA}
  \end{tabular}
  \caption{
    \textbf{Case studies on EBNA1 (6NPP) and KRAS G12D (6GJ8).} Ground-truth ligand is shown in \textbf{black}, and the predicted ligand is shown in \textbf{{\color{red}red}}. The receptor is shown in its predicted configuration. In both examples, \textit{Harmony} produces a more accurate ligand pose and improved local all-atom reconstruction than the baseline.
  }
  \label{fig:case-studies}
\end{figure*}

\subsection{Ablation Studies} \label{subsec:ablations}

\paragraph{Ablation Study on Harmonic Modes.}
We vary the number of harmonic modes $K$ the model learns during training. If performance does not drop significantly from decreasing maximum $K$ it would indicate coarse dynamics of torsional movement from apo to holo structures. This is indeed the case: in right part of \Cref{fig:harmony-three-ablations}, we observe that \textit{Harmony} performs the best with $K=4$. Performance drops manageably for minimum $K=2$. The model starts to lose performance further when too many harmonic modes are used. That can be explained by the above idea of intrinsic low-frequency nature. For all of $K$, \textit{Harmony} outperforms latest rival FlexDock, which highlights the stability against hyperparameter choices.\looseness=-1

\paragraph{Ablation Study on Frequency Damping.}
We believe \textit{Harmony's} results are tied to the damping effect on torsional motion frequencies. In this ablation we remove the damping and allow model to learn all frequencies by itself. Results are reported in \Cref{fig:harmony-three-ablations} (middle) and conclude, that in such setting performance degrades and noising of high frequency components is the central part of the proposed model.

\paragraph{Ablation Study on Learning Signal Between Ligand and Side Chains.}
We would like to additionally answer, whether applying \textit{Harmony's} framework simultaneously to ligand and side-chain torsions improves their performance. To validate this, we trained two versions of our model: \textit{Harmony} only for ligand torsions and \textit{Harmony} only for side-chain torsions. As a result, ligand success rate drops without harmonic model for side chains and vice versa (see right side of \Cref{fig:harmony-three-ablations}). This supports the claim that representations learned by \textit{Harmony} benefit both ligand and protein positions.

\subsection{Case Studies}

Beyond aggregate benchmarks, we examine two representative targets: EBNA1 (6NPP) \citep{messick2019structure} and KRAS G12D (6GJ8) \citep{kessler2019drugging}.
For each complex, we compare the best confidence-ranked pose from the baseline model (FlexDock) and from \textit{Harmony}. To verify that these targets are not unusually close to the training set, we additionally computed ligand, protein, and contact-based similarity scores to PDBBind training complexes as described in Appendix \ref{app:case-studies-similarity}. The mean combined similarity across the test set is \(0.196 \pm 0.019\), while 6GJ8 and 6NPP score \(0.189\) and \(0.183\), respectively, indicating that both case studies are less similar than a typical test target. We emphasize that these similarity scores are only a coarse check against obvious train-test leakage and do not directly measure pocket difficulty, which depends primarily on the spatial arrangement and chemistry of local residues. In all cases, docking is performed from an apo receptor conformation, making the setting substantially more challenging than rigid docking.

\textbf{EBNA1 (6NPP).}
EBNA1 is required for Epstein-Barr virus genome maintenance and is a therapeutically relevant viral target \citep{frappier2012epstein,jiang2018ebna1}. Geometrically, the ligand binds in a broad site with a surrounding loop. The baseline prediction fails to preserve this local closure, whereas \textit{Harmony} maintains the enclosing loop structure and improves both ligand and receptor pose.

\textbf{KRAS G12D (6GJ8).}
KRAS is a central signaling GTPase and one of the most important oncogenic targets in cancer \citep{prior2012comprehensive}. The binding site in 6GJ8 is shallow and highly exposed, so the ligand is only weakly constrained by shape complementarity. This makes the complex particularly difficult for docking from apo structure. In this regime, the baseline model produces a large ligand pose error, while \textit{Harmony} recovers a substantially more accurate conformation closer to the crystal reference.\looseness=-1

\section{Discussion} \label{sec:discussion}

\paragraph{Future Work and Limitations.} 
The main limitation is that \textit{Harmony} is based on a VE diffusion process that needs to have Gaussian noise prior. Thus, adapting this harmonic framework to a generative model for arbitrary endpoint distributions like flow-based interpolants is a compelling future direction as it would enable fully all-atom flexible docking with backbone motion. Additionally, it would be interesting to test harmonic score formulation in other tasks requiring generation of geometric graphs such as protein or material design.

\paragraph{Conclusion.}
We introduced \textit{Harmony}, a harmonic torsional diffusion module for flexible protein-ligand docking. By parameterizing ligand and side-chain torsional scores in Fourier space and utilizing variance-exploding diffusion properties on the torus, \textit{Harmony} makes angular periodicity explicit and improves the modeling of flexible binding sites. Across standard benchmarks and case studies, this simple geometric change consistently improves pose recovery and all-atom reconstruction. More broadly, our results suggest that respecting the intrinsic manifold structure of torsional motion is not a minor modeling detail, but a key ingredient for accurate diffusion-based flexible docking.

\bibliography{references}
\bibliographystyle{icml2026}

\newpage
\appendix
\onecolumn

\section{Proofs of Propositions}
\label{app:proofs}

\begin{proof}[Proof of Proposition~\ref{prop:ve-heat}]
For the driftless SDE
\[
d\mathbf{z}_\lambda = g(\lambda)\,d\mathbf{w}_\lambda,
\]
the infinitesimal generator is
\[
\mathcal{L}_\lambda = \frac{g(\lambda)^2}{2}\Delta.
\]
Since the Laplacian is self-adjoint, the Fokker--Planck equation is
\[
\partial_\lambda p_\lambda = \mathcal{L}_\lambda^\ast p_\lambda = \mathcal{L}_\lambda p_\lambda = \frac{g(\lambda)^2}{2}\Delta p_\lambda.
\]
Thus the forward density evolves by a heat equation with time-dependent diffusivity \(\nu(\lambda)=g(\lambda)^2/2\).
\end{proof}

\begin{proof}[Proof of Proposition~\ref{prop:fourier-eigenbasis}]
On the circle \(\mathbb{T}\) with angular coordinate \(\vartheta\), the Laplace--Beltrami operator is \(\Delta_{\mathbb{T}}=\partial_\vartheta^2\). Hence
\[
\Delta_{\mathbb{T}}\cos(k\vartheta)=\partial_\vartheta^2\cos(k\vartheta)=-k^2\cos(k\vartheta),
\qquad
\Delta_{\mathbb{T}}\sin(k\vartheta)=\partial_\vartheta^2\sin(k\vartheta)=-k^2\sin(k\vartheta),
\]
so \(\cos(k\vartheta)\) and \(\sin(k\vartheta)\) are eigenfunctions with eigenvalue \(-k^2\). Therefore, for
\[
\psi(\vartheta)=\sum_{k=1}^K \big[a_k\cos(k\vartheta)+b_k\sin(k\vartheta)\big],
\]
the heat semigroup acts modewise:
\[
\exp\!\left(\frac{\sigma^2}{2}\Delta_{\mathbb{T}}\right)\cos(k\vartheta)
=
\exp\!\left(-\frac12 k^2\sigma^2\right)\cos(k\vartheta),
\qquad
\exp\!\left(\frac{\sigma^2}{2}\Delta_{\mathbb{T}}\right)\sin(k\vartheta)
=
\exp\!\left(-\frac12 k^2\sigma^2\right)\sin(k\vartheta),
\]
which yields
\[
\exp\!\left(\frac{\sigma^2}{2}\Delta_{\mathbb{T}}\right)\psi(\vartheta)
=
\sum_{k=1}^K
\exp\!\left(-\frac12 k^2\sigma^2\right)
\big[a_k\cos(k\vartheta)+b_k\sin(k\vartheta)\big].
\]
\end{proof}

\begin{proof}[Proof of Proposition~\ref{prop:wrapped-heat}]
Let
\[
\vartheta_\lambda=\mathrm{wrap}\!\left(\vartheta_0+\sigma(\lambda)\epsilon\right),
\qquad
\epsilon\sim\mathcal N(0,1).
\]
Before wrapping, \(\vartheta_0+\sigma(\lambda)\epsilon\) is Gaussian with mean \(\vartheta_0\) and variance \(\sigma^2(\lambda)\). Passing to the quotient \(\mathbb{R}/2\pi\mathbb{Z}\cong\mathbb{T}\) periodizes this Gaussian, giving the wrapped normal kernel
\[
p_\lambda(\vartheta\mid\vartheta_0)
=
\sum_{n\in\mathbb Z}
\frac{1}{\sqrt{2\pi\sigma^2(\lambda)}}
\exp\!\left(
-\frac{(\vartheta-\vartheta_0-2\pi n)^2}{2\sigma^2(\lambda)}
\right).
\]
The heat kernel on \(\mathbb{T}\) at time \(t\) is exactly the periodized Gaussian on \(\mathbb{R}\) with variance \(2t\), so setting \(t=\sigma^2(\lambda)/2\) gives
\[
p_\lambda(\cdot\mid\vartheta_0)
=
\exp\!\left(\frac{\sigma^2(\lambda)}{2}\Delta_{\mathbb{T}}\right)\delta_{\vartheta_0}.
\]
Differentiating with respect to \(\lambda\) yields
\[
\partial_\lambda p_\lambda
=
\frac{(\sigma^2)'(\lambda)}{2}\Delta_{\mathbb{T}}\,p_\lambda.
\]
In particular, if \(\sigma^2(\lambda)=\int_0^\lambda g(s)^2\,ds\), then \((\sigma^2)'(\lambda)=g(\lambda)^2\), and therefore
\[
\partial_\lambda p_\lambda
=
\frac{g(\lambda)^2}{2}\Delta_{\mathbb{T}}\,p_\lambda.
\]
\end{proof}

\section{Stratification of Validity Checks in PoseBusters}
\label{app:strat-posebusters}

In this section we provide a detailed breakdown of validity check passes in PoseBusters for baseline model (FlexDock) and \textit{Harmony} with no energy minimization. In \Cref{fig:posebusters-check-breakdown} it can be seen that our framework improves on protein-ligand distance, internal steric clashes and ligand strain.

\begin{figure*}[t]
  \centering
  \includegraphics[width=\textwidth]{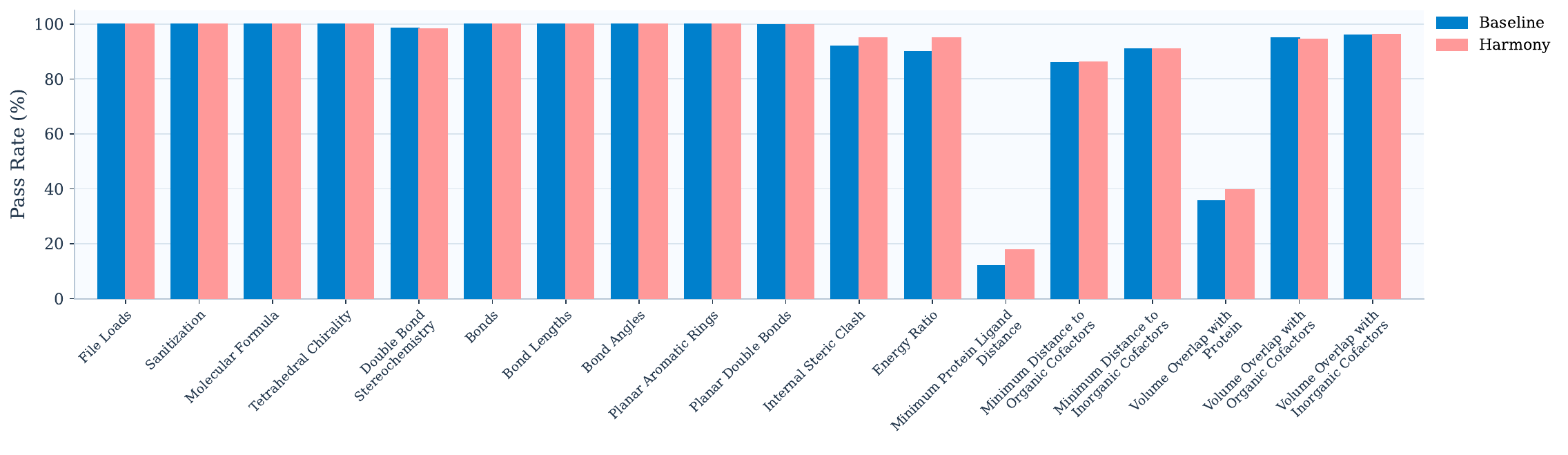}
  \caption{
    PoseBusters check-wise breakdown for the baseline model and \textit{Harmony}. Each bar reports the pass rate for one PoseBusters criterion, highlighting that \textit{Harmony} improves several physically meaningful checks, particularly those related to internal steric validity, ligand strain, and protein--ligand separation.
  }
  \label{fig:posebusters-check-breakdown}
\end{figure*}

\section{Training and Inference}

In this section we provide additional details on training and inference algorithms for our model, and data preparation such as pocket selection, conformer matching, alignment of protein frames and flexibility formalization.

\subsection{Algorithms for Training and Inference of \textit{Harmony}}

We provide comprehensive pseudocode for training and inference of \textit{Harmony} in Algorithm \ref{alg:harmony-training} and Algorithm \ref{alg:harmony-inference}.

\begin{algorithm}[t]
\caption{Training of \textit{Harmony}}
\label{alg:harmony-training}
\begin{algorithmic}[1]
\REQUIRE Preprocessed training set $\mathcal{D}=\{(\mathcal{G}^{(i)}_{\mathrm{pocket}}, \mathbf{z}^{(i)}_0)\}_{i=1}^M$, where $\mathbf{z}_0=(\mathbf{t}_0,\mathbf{R}_0,\boldsymbol{\tau}_0,\boldsymbol{\chi}_0)$; noise schedules $\sigma_{\mathrm{tr}},\sigma_{\mathrm{rot}},\sigma_{\mathrm{tor}},\sigma_{\mathrm{sc}}$; model $f_\theta$
\ENSURE Updated parameters $\theta$

\FORALL{minibatches $\mathcal{B}\subset\mathcal{D}$}
    \STATE Sample $\lambda \sim \mathcal{U}(0,1)$
    \STATE Sample translation noise $\boldsymbol{\epsilon}_{\mathrm{tr}} \sim \mathcal{N}(\mathbf{0},\mathbf{I}_3)$
    \STATE Sample rotational noise $\boldsymbol{\omega}_\lambda \sim \mathrm{IGSO}(3;\sigma_{\mathrm{rot}}(\lambda))$
    \STATE Sample torsional noises $\boldsymbol{\epsilon}_{\mathrm{tor}} \sim \mathcal{N}(\mathbf{0},\mathbf{I}_{m_l})$ and $\boldsymbol{\epsilon}_{\mathrm{sc}} \sim \mathcal{N}(\mathbf{0},\mathbf{I}_{m_s})$

    \STATE Construct the noised docking state:
    \[
    \mathbf{t}_\lambda=\mathbf{t}_0+\sigma_{\mathrm{tr}}(\lambda)\boldsymbol{\epsilon}_{\mathrm{tr}},
    \qquad
    \mathbf{R}_\lambda=\mathrm{Exp}(\boldsymbol{\omega}_\lambda)\mathbf{R}_0,
    \]
    \[
    \boldsymbol{\tau}_\lambda=
    \mathrm{wrap}\!\left(\boldsymbol{\tau}_0+\sigma_{\mathrm{tor}}(\lambda)\boldsymbol{\epsilon}_{\mathrm{tor}}\right),
    \qquad
    \boldsymbol{\chi}_\lambda=
    \mathrm{wrap}\!\left(\boldsymbol{\chi}_0+\sigma_{\mathrm{sc}}(\lambda)\boldsymbol{\epsilon}_{\mathrm{sc}}\right).
    \]

    \STATE Apply $(\mathbf{t}_\lambda,\mathbf{R}_\lambda,\boldsymbol{\tau}_\lambda,\boldsymbol{\chi}_\lambda)$ to the clean pocket graph to obtain $\mathcal{G}_\lambda$

    \STATE Predict rigid-body scores and harmonic coefficients:
    \[
    \hat{\mathbf{s}}_{\mathrm{tr}},\;
    \hat{\mathbf{s}}_{\mathrm{rot}},\;
    \{\hat a^{\,l}_{e,k},\hat b^{\,l}_{e,k}\},\;
    \{\hat a^{\,sc}_{e,k},\hat b^{\,sc}_{e,k}\}
    \leftarrow
    f_\theta(\mathcal{G}_\lambda,\lambda).
    \]

    \STATE Evaluate ligand and side-chain torsional scores from the harmonic heads:
    \[
    \hat{\mathbf{s}}_{\mathrm{tor}}
    \leftarrow
    \partial_\vartheta \hat{\psi}^{\,l}_{\lambda}(\vartheta),
    \qquad
    \hat{\mathbf{s}}_{\mathrm{sc}}
    \leftarrow
    \partial_\vartheta \hat{\psi}^{\,sc}_{\lambda}(\vartheta),
    \]
    \STATE where each $\hat{\psi}_{e,\lambda}$ is the heat-damped harmonic potential from \Cref{eq:psi}.

    \STATE Compute score targets from the known forward kernel:
    \[
    \mathbf{s}^{\star}_{\mathrm{tr}},\;
    \mathbf{s}^{\star}_{\mathrm{rot}},\;
    \mathbf{s}^{\star}_{\mathrm{tor}},\;
    \mathbf{s}^{\star}_{\mathrm{sc}}
    \leftarrow
    \nabla \log q_\lambda(\mathbf{z}_\lambda \mid \mathbf{z}_0).
    \]

    \STATE Compute the training loss:
    \[
    \mathcal{L}
    =
    \mathcal{L}_{\mathrm{tr}}
    +
    \mathcal{L}_{\mathrm{rot}}
    +
    \mathcal{L}_{\mathrm{tor}}
    +
    \mathcal{L}_{\mathrm{sc}},
    \]
    \STATE with each term matching the corresponding predicted and target scores.

    \STATE Update $\theta$ with Adam on $\mathcal{L}$
\ENDFOR
\end{algorithmic}
\end{algorithm}

\begin{algorithm}[t]
\caption{Inference with \textit{Harmony}}
\label{alg:harmony-inference}
\begin{algorithmic}[1]
\REQUIRE Pocket-centered apo graph $\mathcal{G}^{\mathrm{apo}}_{\mathrm{pocket}}$, initial ligand conformer, number of reverse steps $N$, noise schedules $\sigma_{\mathrm{tr}},\sigma_{\mathrm{rot}},\sigma_{\mathrm{tor}},\sigma_{\mathrm{sc}}$, trained model $f_\theta$
\ENSURE Predicted bound ligand pose and pocket side-chain arrangement

\STATE Initialize:
\[
\mathbf{t}_N \sim \mathcal{N}(\mathbf{0},\sigma_{\mathrm{tr},\max}^2\mathbf{I}_3),
\qquad
\mathbf{R}_N \sim \mathcal{U}(SO(3)),
\]
\[
\boldsymbol{\tau}_N \sim \mathcal{U}(\mathbb{T}^{m_l}),
\qquad
\boldsymbol{\chi}_N \sim \mathcal{U}(\mathbb{T}^{m_s}).
\]

\STATE Apply $(\mathbf{t}_N,\mathbf{R}_N,\boldsymbol{\tau}_N,\boldsymbol{\chi}_N)$ to the initial ligand and apo pocket side chains to obtain $\mathcal{G}_N$

\FOR{$n=N,N-1,\ldots,1$}
    \STATE Set $\lambda_n = n/N$, $\lambda_{n-1} = (n-1)/N$, and $\Delta \lambda = \lambda_n - \lambda_{n-1}$

    \STATE Predict rigid-body scores and harmonic coefficients:
    \[
    \hat{\mathbf{s}}_{\mathrm{tr}},\;
    \hat{\mathbf{s}}_{\mathrm{rot}},\;
    \{\hat a^{\,l}_{e,k},\hat b^{\,l}_{e,k}\},\;
    \{\hat a^{\,sc}_{e,k},\hat b^{\,sc}_{e,k}\}
    \leftarrow
    f_\theta(\mathcal{G}_n,\lambda_n).
    \]

    \STATE Evaluate torsional scores from the harmonic heads:
    \[
    \hat{\mathbf{s}}_{\mathrm{tor}}
    \leftarrow
    \partial_\vartheta \hat{\psi}^{\,l}_{\lambda_n}(\vartheta),
    \qquad
    \hat{\mathbf{s}}_{\mathrm{sc}}
    \leftarrow
    \partial_\vartheta \hat{\psi}^{\,sc}_{\lambda_n}(\vartheta).
    \]

    \STATE Sample reverse-time noises:
    \[
    \boldsymbol{\xi}_{\mathrm{tr}} \sim \mathcal{N}(\mathbf{0},\mathbf{I}_3),\quad
    \boldsymbol{\xi}_{\mathrm{rot}} \sim \mathcal{N}(\mathbf{0},\mathbf{I}_3),
    \]
    \[
    \boldsymbol{\xi}_{\mathrm{tor}} \sim \mathcal{N}(\mathbf{0},\mathbf{I}_{m_l}),\quad
    \boldsymbol{\xi}_{\mathrm{sc}} \sim \mathcal{N}(\mathbf{0},\mathbf{I}_{m_s}).
    \]

    \STATE Perform one reverse VE update:
    \[
    \mathbf{t}_{n-1}
    =
    \mathbf{t}_n
    +
    g_{\mathrm{tr}}(\lambda_n)^2 \Delta\lambda \,\hat{\mathbf{s}}_{\mathrm{tr}}
    +
    g_{\mathrm{tr}}(\lambda_n)\sqrt{\Delta\lambda}\,\boldsymbol{\xi}_{\mathrm{tr}},
    \]
    \[
    \mathbf{R}_{n-1}
    =
    \mathrm{Exp}\!\left(
    g_{\mathrm{rot}}(\lambda_n)^2 \Delta\lambda \,\hat{\mathbf{s}}_{\mathrm{rot}}
    +
    g_{\mathrm{rot}}(\lambda_n)\sqrt{\Delta\lambda}\,\boldsymbol{\xi}_{\mathrm{rot}}
    \right)\mathbf{R}_n,
    \]
    \[
    \boldsymbol{\tau}_{n-1}
    =
    \mathrm{wrap}\!\left(
    \boldsymbol{\tau}_n
    +
    g_{\mathrm{tor}}(\lambda_n)^2 \Delta\lambda \,\hat{\mathbf{s}}_{\mathrm{tor}}
    +
    g_{\mathrm{tor}}(\lambda_n)\sqrt{\Delta\lambda}\,\boldsymbol{\xi}_{\mathrm{tor}}
    \right),
    \]
    \[
    \boldsymbol{\chi}_{n-1}
    =
    \mathrm{wrap}\!\left(
    \boldsymbol{\chi}_n
    +
    g_{\mathrm{sc}}(\lambda_n)^2 \Delta\lambda \,\hat{\mathbf{s}}_{\mathrm{sc}}
    +
    g_{\mathrm{sc}}(\lambda_n)\sqrt{\Delta\lambda}\,\boldsymbol{\xi}_{\mathrm{sc}}
    \right).
    \]

    \STATE Apply $(\mathbf{t}_{n-1},\mathbf{R}_{n-1},\boldsymbol{\tau}_{n-1},\boldsymbol{\chi}_{n-1})$ to obtain the updated graph $\mathcal{G}_{n-1}$
\ENDFOR

\STATE \textbf{return} final ligand coordinates and pocket side-chain coordinates induced by $(\mathbf{t}_0,\mathbf{R}_0,\boldsymbol{\tau}_0,\boldsymbol{\chi}_0)$
\end{algorithmic}
\end{algorithm}

\subsection{Data Preparation} \label{app:data-prep}

\begin{algorithm}[t]
\caption{Finding Flexible Bonds and Rotated Fragments}
\label{alg:flexible-bonds}
\begin{algorithmic}[1]
\REQUIRE Directed bond graph $\mathcal{G}=(\mathcal{V},\mathcal{E})$, optional anchor atom $v_{\mathrm{base}}$
\ENSURE Rotatable-edge mask $m_{\mathrm{rot}}$, fragment index $\mathcal{F}$

\STATE Initialize $m_{\mathrm{rot}}\leftarrow \mathbf{0}$ and $\mathcal{F}\leftarrow \emptyset$

\FORALL{undirected bonds \(\{u,v\}\) represented by directed edges \((u,v)\) and \((v,u)\)}
    \STATE Remove \(\{u,v\}\) from the undirected version of \(\mathcal{G}\)

    \IF{the graph remains connected}
        \STATE Mark both directions as non-rotatable
        \STATE \textbf{continue}
    \ENDIF

    \STATE Let \(C_1,C_2\) be the two connected components

    \IF{$v_{\mathrm{base}}$ is specified}
        \STATE Choose as rotated fragment the component that does \emph{not} contain \(v_{\mathrm{base}}\)
    \ELSE
        \STATE Choose as rotated fragment the smaller component
    \ENDIF

    \STATE Let \(C_{\mathrm{rot}}\) be the chosen fragment and let \(e_{\mathrm{rot}}\) be the directed edge whose orientation points into that fragment

    \IF{$|C_{\mathrm{rot}}| > 1$}
        \STATE Set \(m_{\mathrm{rot}}(e_{\mathrm{rot}})\leftarrow 1\)
    \ENDIF

    \STATE Store the rotated atoms:
    \[
    \mathcal{F}\leftarrow \mathcal{F}\cup\{(e_{\mathrm{rot}},a): a\in C_{\mathrm{rot}}\}
    \]
\ENDFOR

\STATE Construct a dependency graph between rotatable edges from fragment overlap and topologically sort them
\STATE Reorder \(\mathcal{F}\) according to this torsional application order

\STATE \textbf{return} $m_{\mathrm{rot}}, \mathcal{F}$
\end{algorithmic}
\end{algorithm}

\begin{algorithm}[t]
\caption{Conformer Matching for Ligands and Side Chains}
\label{alg:conformer-matching}
\begin{algorithmic}[1]
\REQUIRE Reference ligand pose $\mathbf{X}^{l,\star}$, apo receptor coordinates $\mathbf{X}^{a}$, holo receptor coordinates $\mathbf{X}^{h}$
\ENSURE Matched ligand conformer $\tilde{\mathbf{X}}^{l}$, matched apo side-chain coordinates $\tilde{\mathbf{X}}^{a}$, torsional updates $\Delta\boldsymbol{\chi}$

\STATE \textbf{Ligand conformer matching:}
\STATE Generate a fresh ligand conformer $\mathbf{X}^{l,0}$ with RDKit and force-field refinement
\STATE Identify ligand rotatable bonds $\mathcal{B}^l$
\STATE Optimize torsion angles $\boldsymbol{\tau}$ with differential evolution:
\[
\hat{\boldsymbol{\tau}}
=
\arg\min_{\boldsymbol{\tau}\in[-\pi,\pi]^{|\mathcal{B}^l|}}
\mathrm{RMSD}\!\left(\mathrm{Rotate}(\mathbf{X}^{l,0},\boldsymbol{\tau}),\mathbf{X}^{l,\star}\right)
\]
\STATE Set
\[
\tilde{\mathbf{X}}^{l}=\mathrm{Rotate}(\mathbf{X}^{l,0},\hat{\boldsymbol{\tau}})
\]

\STATE \textbf{Side-chain conformer matching:}
\STATE Compute global apo--holo Kabsch alignment using receptor $C_\alpha$ atoms
\STATE Align apo side chains into local holo backbone frames using per-residue $N$--$C_\alpha$--$C$ frames
\STATE Initialize $\tilde{\mathbf{X}}^{a}\leftarrow \mathbf{X}^{a}_{\mathrm{aligned}}$

\FORALL{flexible residues $r$}
    \STATE Extract residue-specific rotatable side-chain bonds $\mathcal{B}^a_r$
    \IF{$\mathcal{B}^a_r=\emptyset$}
        \STATE \textbf{continue}
    \ENDIF

    \STATE Optimize residue torsions independently with differential evolution:
    \[
    \hat{\boldsymbol{\chi}}_r
    =
    \arg\min_{\boldsymbol{\chi}_r\in[-\pi,\pi]^{|\mathcal{B}^a_r|}}
    \mathrm{RMSD}\!\left(
    \mathrm{RotateResidue}(\tilde{\mathbf{X}}^{a},r,\boldsymbol{\chi}_r),
    \mathbf{X}^{h}
    \right)
    \]

    \IF{the optimized residue RMSD improves}
        \STATE Apply $\hat{\boldsymbol{\chi}}_r$ to $\tilde{\mathbf{X}}^{a}$
        \STATE Store the corresponding per-bond torsional updates in $\Delta\boldsymbol{\chi}$
    \ENDIF
\ENDFOR

\STATE \textbf{return} $\tilde{\mathbf{X}}^{l}, \tilde{\mathbf{X}}^{a}, \Delta\boldsymbol{\chi}$
\end{algorithmic}
\end{algorithm}

Our data preparation largely follows DiffDock-Pocket \citep{plainer2023diffdock} and FlexDock \citep{corso2025composing}. Below are more details.

\paragraph{Pocket Selection.}
We use pocket-centered docking to reduce the search space and focus the model on the local receptor environment most relevant for binding. Starting from the all-atom apo receptor and the reference ligand pose, we first identify a set of nearby residues by computing the minimum distance from each receptor residue to the ligand and selecting residues within a pocket cutoff. In the implementation, this is done at the atom level and then aggregated to the residue level by taking the minimum atom--ligand distance within each residue. Let \(S_{\mathrm{near}}\) denote this initial set of nearby residues. We then define the pocket center as the mean of the corresponding apo \(C_\alpha\) coordinates,
\[
\mathbf{c}_{\mathrm{pocket}}
=
\frac{1}{|S_{\mathrm{near}}|}
\sum_{r\in S_{\mathrm{near}}}\mathbf{x}^{\mathrm{apo}}_{r,C_\alpha}.
\]
We additionally enlarge the crop by retaining all residues whose apo \(C_\alpha\) lies within a fixed buffer radius of \(\mathbf{c}_{\mathrm{pocket}}\) which is chosen to be 10 \AA{}. The buffer is needed to make model for robust to the choice of a pocket as noted in DiffDock-Pocket \citep{plainer2023diffdock} and FlexDock \citep{corso2025composing}.This yields a pocket-centered residue subset together with the corresponding protein atoms. If the initial cutoff would produce too few residues, the nearest residues are added until a minimum pocket size is reached. After cropping all coordinates are recentered by subtracting \(\mathbf{c}_{\mathrm{pocket}}\). At inference time, the holo receptor coordinates are not available and we define pocket center for the apo receptor. 

\paragraph{Conformer Matching for Ligand and Side Chains.}
Crystal structures are used as targets, and their distance variables such as bond lengths or angles fundamentally differ from ones, produced by generated ligand conformers (e.g. in RDKit) or apo receptor poses derived from folding models (e.g ESMFold). Thus, usage of crystal coordinates would induce distributional shift during inference. As discussed in \citet{plainer2023diffdock} and \citet{corso2025composing}, to tackle this problem we need to perform conformer matching for ligand and side chains before training. Full algorithm can be examined in Algorithm \ref{alg:conformer-matching}.

For ligands, we generate a fresh conformer using RDKit with a force field refinement, identify the rotatable bonds, and then optimize the ligand torsion angles by differential evolution so that the generated conformer best matches the reference ligand pose. The resulting matched conformer is used as the input ligand, while the original ligand coordinates are retained as the reference pose. 

For receptor side chains, after apo-holo alignment, we optimize each flexible residue independently over its torsional degrees of freedom using differential evolution, with the objective of minimizing the RMSD between the moved side-chain atoms and the corresponding holo coordinates. The optimized torsional differences are stored per rotatable side-chain bond, and applying them to the aligned apo coordinates yields a conformer-matched apo structure whose side chains are as close as possible to the holo arrangement. This matched structure provides the target for our model.

\paragraph{Graph Flexibility.}
Flexibility is represented through directed torsional edges and binary masks indicating which bonds are actually rotatable. General algorithm is demonstrated in Algorithm \ref{alg:flexible-bonds}. For ligands, a covalent bond is considered rotatable if removing the corresponding undirected bond disconnects the molecular graph and the moved fragment contains more than one atom. Among the two directed versions of the bond, we retain the direction associated with the fragment that will be rotated and store the list of affected atoms for each torsion. This induces a mask over ligand bond edges together with a fragment index that specifies which atoms move when a given torsion is perturbed.

For receptor side chains, we construct a directed side-chain bond graph residue by residue, starting from standard PDB atom names and moving outward from \(C_\alpha\) along the side-chain topology. Glycine has no side-chain torsions and contributes no rotatable edges. Proline side-chain bonds are included in the graph but are marked non-rotatable. For all other residues, the same graph criterion is applied to identify side-chain bonds whose rotation moves a particular fragment.

\paragraph{Alignment of Apo-Holo Frames.}
The purpose of apo-holo alignment is to disentangle rigid-body differences, local backbone frame differences, and genuine side-chain conformational changes before defining the diffusion targets. The first step is a global Kabsch alignment of the apo receptor to the holo receptor using the \(C_\alpha\) atoms. This removes the trivial rigid-body offset between the two structures. The second step aligns apo side chains into the local backbone frames of the holo structure. Concretely, for each residue we construct a local frame from the \(N\)-\(C_\alpha\)-\(C\) backbone triplet, compute the rotation that maps the apo frame to the holo frame, and apply that rotation to the corresponding local side-chain coordinates. The resulting coordinates are stored as the aligned apo structure. At inference time, no holo structure is available, thus the apo structure itself defines the working frame without any alignment.

\section{Model Architecture}

\subsection{Graph Construction.} \label{app:graph}

Harmony uses a heterogeneous representation of the pocket complex with three node sets: ligand atoms, receptor residues at \(C_\alpha\) positions, and receptor atoms. Following FlexDock \citep{corso2025composing}, we distinguish between \emph{static} structural graphs constructed during preprocessing and \emph{dynamic} geometric graphs recomputed online from the current noised coordinates during message passing.

\paragraph{Static Structural Graphs.}
The ligand is first represented by its bidirected covalent bond graph extracted from the RDKit molecular graph. Each covalent bond contributes two directed edges, and the corresponding bond type is stored as a one-hot edge feature over the bond classes \(\{\text{single},\text{double},\text{triple},\text{aromatic},\text{dative}\}\). This covalent graph is also used to define ligand torsional degrees of freedom through the rotatable edge masks described in Appendix \ref{app:data-prep}.

For the receptor, we store one residue node per amino acid at the \(C_\alpha\) coordinate, and one atom node per protein atom in the selected pocket. We additionally store a deterministic atom-to-residue mapping edge connecting each receptor atom to its parent residue. Side-chain torsional topology is cached separately as a directed residue-wise bond graph constructed from standard PDB atom naming.

\paragraph{Dynamic Geometric Graphs.}
During training and inference, the equivariant backbone rebuilds its spatial neighborhoods from the current noised coordinates. The ligand atom graph is augmented with a geometric radius graph using a \(6\) \AA{} cutoff. Thus, ligand message passing is performed on the union of covalent edges and local geometric edges.

The residue-level receptor graph is constructed with a \(k\)-nearest-neighbor graph on \(C_\alpha\) coordinates, using at most \(24\) neighbors per residue, followed by a radius filter at \(15\) \AA{}. The receptor atom graph is built analogously with at most \(12\) neighbors per atom and a radius cutoff of \(6\) \AA{}.

We use three types of cross-graph interactions. First, ligand atoms are connected to receptor residue nodes by a radius graph with maximum distance \(80\) \AA{}. Second, ligand atoms are connected to receptor atoms within \(6\) \AA{}. Third, each receptor atom is connected to its parent residue through the static atom-to-residue assignment edges described above.

\subsection{Featurization} \label{app:featurization}

\paragraph{Ligand Features.}
Ligand atom features follow the standard RDKit-based featurization used in DiffDock-Pocket and FlexDock. For each ligand atom, we encode atomic number, chirality, degree, formal charge, implicit valence, total hydrogen count, number of radical electrons, hybridization state, aromaticity, number of rings, and binary membership indicators for rings of sizes \(3\) through \(8\). Covalent ligand edges carry a one-hot bond-type encoding as described above. For non-covalent geometric ligand edges added online by the radius graph, the bond-type channel is set to zero.

\paragraph{Receptor Features.}
Each receptor residue node is featurized by amino-acid identity, and, a precomputed ESM-2 embedding is concatenated to this categorical residue representation. Each receptor atom node is featurized by four categorical attributes: the identity of the parent residue, the atomic number, a coarse atom-name category, and a fine atom-name category. The coarse atom-name category is obtained from the first two characters of the PDB atom name (e.g., \texttt{CA}, \texttt{CB}, \texttt{ND}), while the fine category uses the full atom name (e.g., \texttt{CA}, \texttt{CD1}, \texttt{NE2}).

In addition to these categorical features, we store several auxiliary atom-level quantities used by the model, including van der Waals radii and boolean masks identifying \(C_\alpha\), \(C\), and \(N\) atoms. Each residue node is additionally augmented with the two local backbone vectors \( \mathbf{x}_N - \mathbf{x}_{C_\alpha}\) and \(\mathbf{x}_C - \mathbf{x}_{C_\alpha}\), yielding a \(6\)-dimensional continuous orientation feature.

\paragraph{Geometric and Diffusion Features.}
Beyond the static node features above, Harmony augments all node types with diffusion-time embeddings computed from the current noise level. We use sinusoidal time embeddings of dimension \(32\). Edge distances are expanded with radial basis functions of dimension \(32\) for both intra-graph and cross-graph interactions. The equivariant backbone further computes spherical harmonics from the relative edge vectors, so each geometric edge is represented by its source-node time embedding, distance expansion, and directional information derived from the relative coordinates.

\paragraph{Torsional Metadata.}
Finally, preprocessing stores the torsional metadata required for ligand and side-chain updates: rotatable edge masks, fragment indices specifying which atoms move under each torsion, and matched torsional targets obtained from conformer matching (Appendix \ref{app:data-prep}).

\subsection{Model} \label{app:model}

\paragraph{Architecture Details.}
\textit{Harmony} is instantiated as a heterogeneous e3nn \citep{geiger2022e3nn} $SE(3)$-equivariant network over three coupled graphs: the ligand atom graph, the residue-level receptor backbone graph defined on $C_\alpha$ atoms, and the receptor atom graph in the selected pocket. The shared backbone uses $6$ equivariant tensor-product convolution layers with hidden widths $n_s=60$ for scalar channels and $n_v=15$ for higher-order channels, and spherical harmonics up to degree $l_{\max}=2$. We use layer normalization, SiLU activations, and dropout with rate $0.1$ throughout.

Node features are first embedded from categorical atom-level and residue-level descriptors together with sinusoidal diffusion-time embeddings of dimension $32$. Pairwise distances are expanded with RBFs of dimension $32$ for both intra-graph and cross-graph edges. The equivariant message-passing neighborhoods are constructed separately at each scale, with ligand, residue, atom, and cross-graph interactions handled by the shared heterogeneous backbone. Full graph construction and featurization details are deferred to \Cref{app:graph}.

On top of the shared equivariant trunk, \textit{Harmony} uses three prediction modules. A rigid-body head predicts ligand translation and rotation scores. A ligand torsion head predicts harmonic coefficients for ligand torsional score fields, and a side-chain torsion head predicts harmonic coefficients for flexible pocket side-chain torsions. The side-chain torsion head additionally conditions the harmonic coefficients on the diffusion noise level. These predicted coefficients are converted into periodic torsional scores through the harmonic parameterization introduced in \Cref{subsec:harmony}.

\paragraph{Confidence Head.}
Rather than training a separate confidence model, we attach a lightweight confidence head to the shared docking encoder. This head takes the pooled complex representation produced by the main docking network and predicts a scalar confidence score for each sampled pose, so confidence estimation is learned jointly with docking while adding only minimal computational overhead.

As the supervision target, we use protein--ligand interface lDDT (PLI-lDDT) \citep{lu2024dynamicbind}, which measures how well the predicted complex preserves native receptor--ligand contact distances. Specifically, PLI-lDDT averages, over atom pairs that are in contact in the reference complex, the fraction of predicted distances that fall within several tolerance thresholds. This makes it a natural confidence target, since it reflects local binding-site quality rather than only global coordinate error.

To focus confidence learning on geometries that are close to the final denoised samples, we use a time-weighted regression loss. If $\hat{c}_n$ denotes the predicted confidence for sample $n$, $c_n$ the corresponding PLI-lDDT target, and $\lambda_n$ its diffusion level, we optimize
\[
\mathcal{L}_{\mathrm{conf}}
=
\frac{\sum_n w(\lambda_n)\,|\hat{c}_n-c_n|}{\sum_n w(\lambda_n)},
\qquad
w(\lambda)=(1-\lambda)^2.
\]
This weighting emphasizes low-noise states, which are more relevant to the final docking prediction, while reducing the influence of highly corrupted intermediate samples.

\paragraph{OpenEquivariance CUDA Kernels for Tensor Products.}

Our implementation uses OpenEquivariance \citep{bharadwaj2025efficient} as the backend for the Clebsch--Gordan tensor products inside the e3nn equivariant convolutions. These tensor products are the main low-level bottleneck of $SE(3)$-equivariant message passing, since they are evaluated repeatedly across ligand, residue, and atom graphs at every layer. OpenEquivariance replaces generic tensor-product execution with JIT-compiled CUDA kernels specialized to the irreducible representation layout of the model. In particular, the generated kernels exploit the structured sparsity of Clebsch-Gordan coefficient tensors, precompute efficient execution schedules from the fixed irrep structure, and reduce memory consumption by staging small contractions. In \textit{Harmony}, enabling these kernels leaves the architecture and training objective unchanged, but substantially improves throughput of the equivariant backbone. Empirically, this reduces end-to-end training time from roughly $4.5$ days to $2$ days on $8$ H100 80GB GPUs.

\section{Experimental Details} \label{app:exp-details}




\subsection{Data} \label{app:data}

We train \textit{Harmony} on PDBBind \citep{liu2017forging}, using essentially the same data pipeline asFlexDock \citep{corso2025composing}. In particular, we adopt the standard time-based split, where complexes deposited before 2019 are divided into training and validation sets, while 363 complexes deposited after 2019 are reserved for testing.

To construct paired holo and apo receptor structures, the protein-ligand complexes in PDBBind are processed them with PDBFixer from the OpenMM \citep{eastman2023openmm} to standardize non-canonical residues and complete missing heavy atoms. These corrected complex structures define our holo references. We then extract the protein sequence from the processed receptor and predict an apo structure with ESMFold \citep{lin2022language}. The resulting repaired ESMFold structures are used as apo receptors. Hydrogen atoms are removed from both apo and holo structures.

In addition to pose-accuracy metrics, we evaluate the physical plausibility of generated complexes with PoseBusters \citep{buttenschoen2024posebusters}. This benchmark assesses whether predicted poses satisfy basic chemical and geometric constraints, including ligand internal validity and protein-ligand interaction checks such as steric clash detection.

\subsection{Evaluation Metrics} \label{app:eval-metrics}

We evaluate docking quality using both ligand-level and complex-level geometric criteria. Our primary ligand metric is the ligand root-mean-square deviation (L-RMSD) between the predicted ligand pose and the reference crystal pose after optimal rigid alignment of the pockets. For a ligand with \(N_l\) atoms, this is
\[
\mathrm{L\mbox{-}RMSD}
=
\sqrt{
\frac{1}{N_l}
\sum_{i=1}^{N_l}
\left\|
\hat{\mathbf{x}}^{\,l}_i-\mathbf{x}^{\,l,\star}_i
\right\|_2^2
},
\]
where \(\hat{\mathbf{x}}^{\,l}_i\) and \(\mathbf{x}^{\,l,\star}_i\) denote the predicted and reference coordinates of ligand atom \(i\), respectively. Following standard docking practice, we report the fraction of complexes satisfying
\[
\mathrm{L\mbox{-}RMSD} < 2 \text{\AA},
\]
which we refer to as the \textit{ligand success rate}.

To assess whether the full predicted complex is geometrically accurate, including the flexible binding-site side chains, we also measure all-atom RMSD (AA-RMSD) over the receptor atoms retained in the pocket representation. Let \(N_a\) denote the total number of evaluated atoms. Then
\[
\mathrm{AA\mbox{-}RMSD}
=
\sqrt{
\frac{1}{N_a}
\sum_{i=1}^{N_a}
\left\|
\hat{\mathbf{x}}_i-\mathbf{x}^{\star}_i
\right\|_2^2
}.
\]
We report the fraction of predictions with
\[
\mathrm{AA\mbox{-}RMSD} < 1 \text{\AA},
\]
which gives a stricter measure of local all-atom reconstruction quality.





\subsection{Similarity Analysis for Case Studies} \label{app:case-studies-similarity}

To contextualize the qualitative case studies, we compared each target complex against the PDBBind training split using a simple descriptor-based similarity analysis.

\paragraph{Ligand Similarity.}
For each ligand, we compute a Morgan fingerprint \citep{morgan1965generation} with radius \(2\) and \(1024\) bits after removing hydrogens. Let \(F_l(A)\) and \(F_l(B)\) denote the sets of active fingerprint bits for two ligands \(A\) and \(B\). Ligand similarity is measured by the Dice coefficient
\[
S_{\mathrm{lig}}(A,B)
=
\frac{2\,|F_l(A)\cap F_l(B)|}{|F_l(A)|+|F_l(B)|}.
\]

\paragraph{Protein Similarity.}
For each receptor, we parse the protein structure file and convert the amino-acid sequence into a one-letter sequence. Protein similarity is then computed as global sequence identity obtained from a Needleman--Wunsch alignment \citep{needleman1970general}. If \(L_{\mathrm{match}}(A,B)\) denotes the number of matched residues in the optimal global alignment of proteins \(A\) and \(B\), and \(L_{\mathrm{align}}(A,B)\) is the aligned length including gaps, then
\[
S_{\mathrm{prot}}(A,B)
=
\frac{L_{\mathrm{match}}(A,B)}{L_{\mathrm{align}}(A,B)}.
\]

\paragraph{Contact Interaction Similarity.}
To obtain a coarse description of the local binding environment without relying on explicit interaction typing, we build a residue-contact signature from heavy-atom distances between the ligand and the receptor. For each receptor residue, we compute the minimum heavy-atom distance to the ligand and record the residue if that distance is below \(4.5\) \AA{}. Contacts are further divided into two distance buckets: \emph{close} for distances at most \(3.5\) \AA{} and \emph{near} for distances between \(3.5\) and \(4.5\) \AA{}. Repeated occurrences of the same residue type within a bucket are counted separately, yielding a contact signature \(F_c\). We then compare two complexes with the Jaccard similarity
\[
S_{\mathrm{cont}}(A,B)
=
\frac{|F_c(A)\cap F_c(B)|}{|F_c(A)\cup F_c(B)|}.
\]

\paragraph{Combined Similarity Score.}
For each target complex, we compare its descriptor triplet against all complexes in the training split and average the three similarity components:
\[
S_{\mathrm{comb}}(A,B)
=
\frac{
S_{\mathrm{lig}}(A,B)
+
S_{\mathrm{prot}}(A,B)
+
S_{\mathrm{cont}}(A,B)
}{3}.
\]

\newpage
\section{Visualizations} \label{app:vis}

\begin{figure}[H]
  \centering

  \begin{tabular}{cc}
    \includegraphics[width=0.48\textwidth]{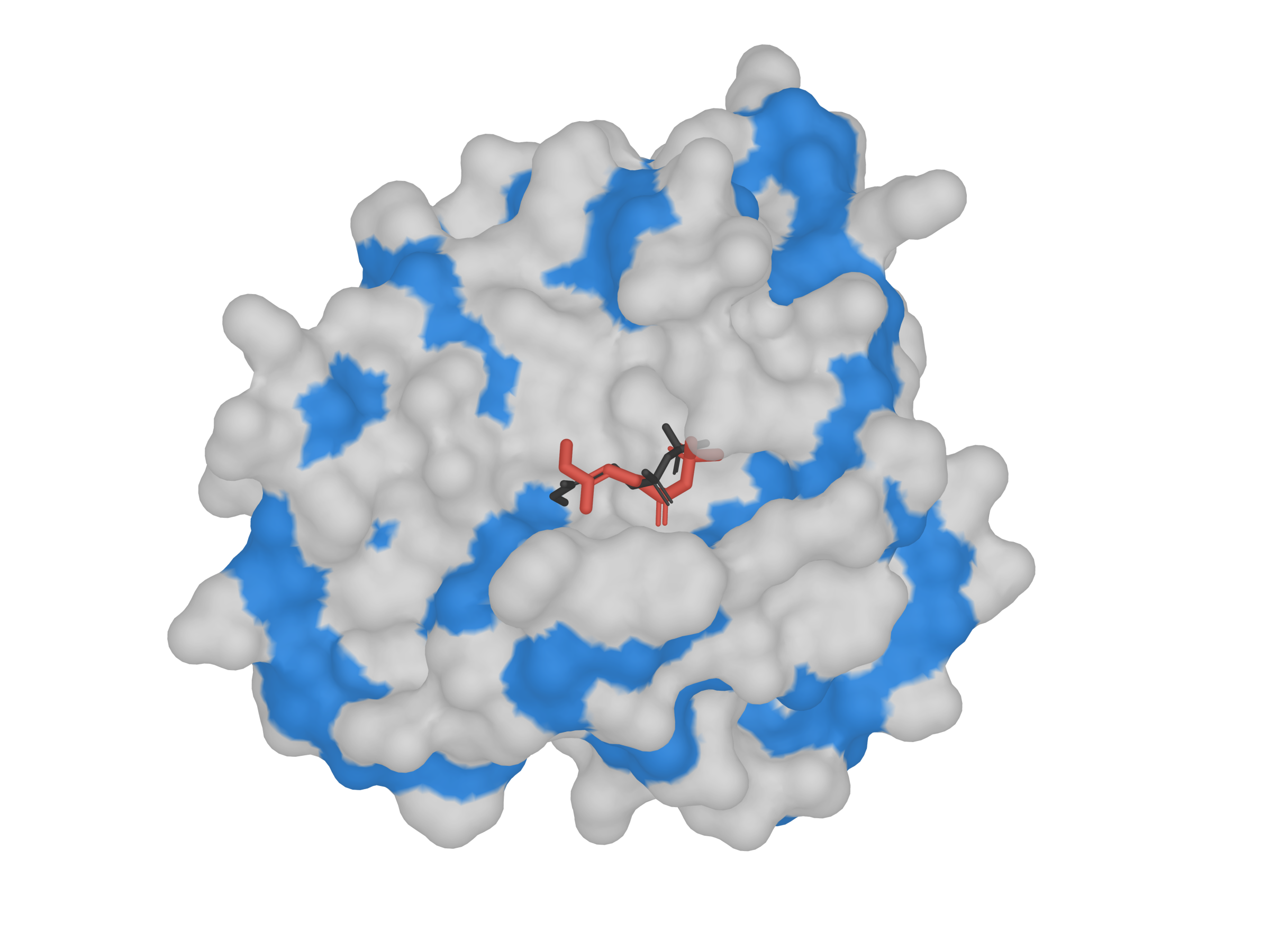} &
    \includegraphics[width=0.48\textwidth]{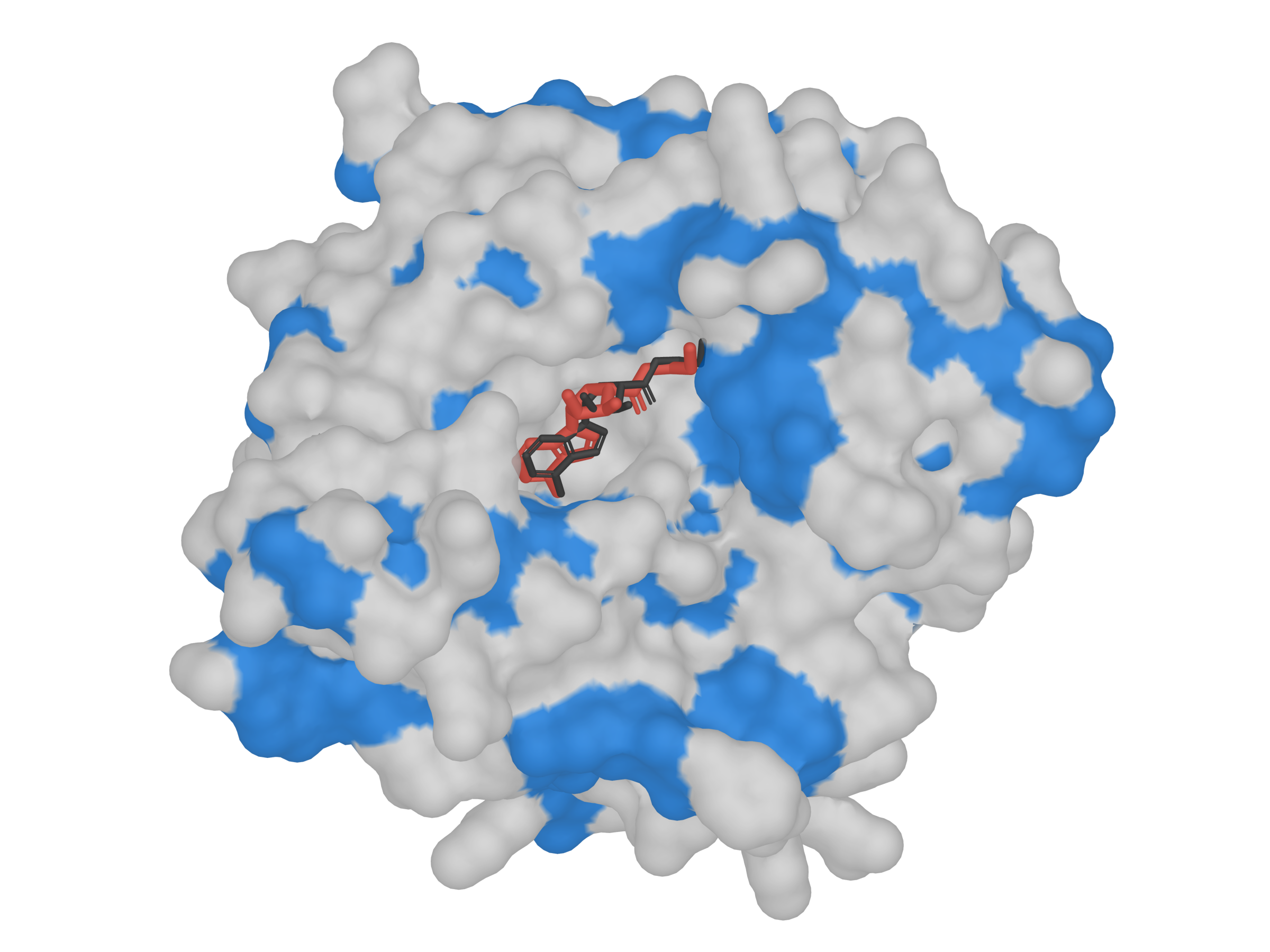} \\
    \small \textbf{5ZXK}: L-RMSD \textbf{2.53 \AA}, AA-RMSD \textbf{1.19 \AA} &
    \small \textbf{6CYG}: L-RMSD \textbf{0.53 \AA}, AA-RMSD \textbf{1.47 \AA} \\[1.0em]

    \includegraphics[width=0.48\textwidth]{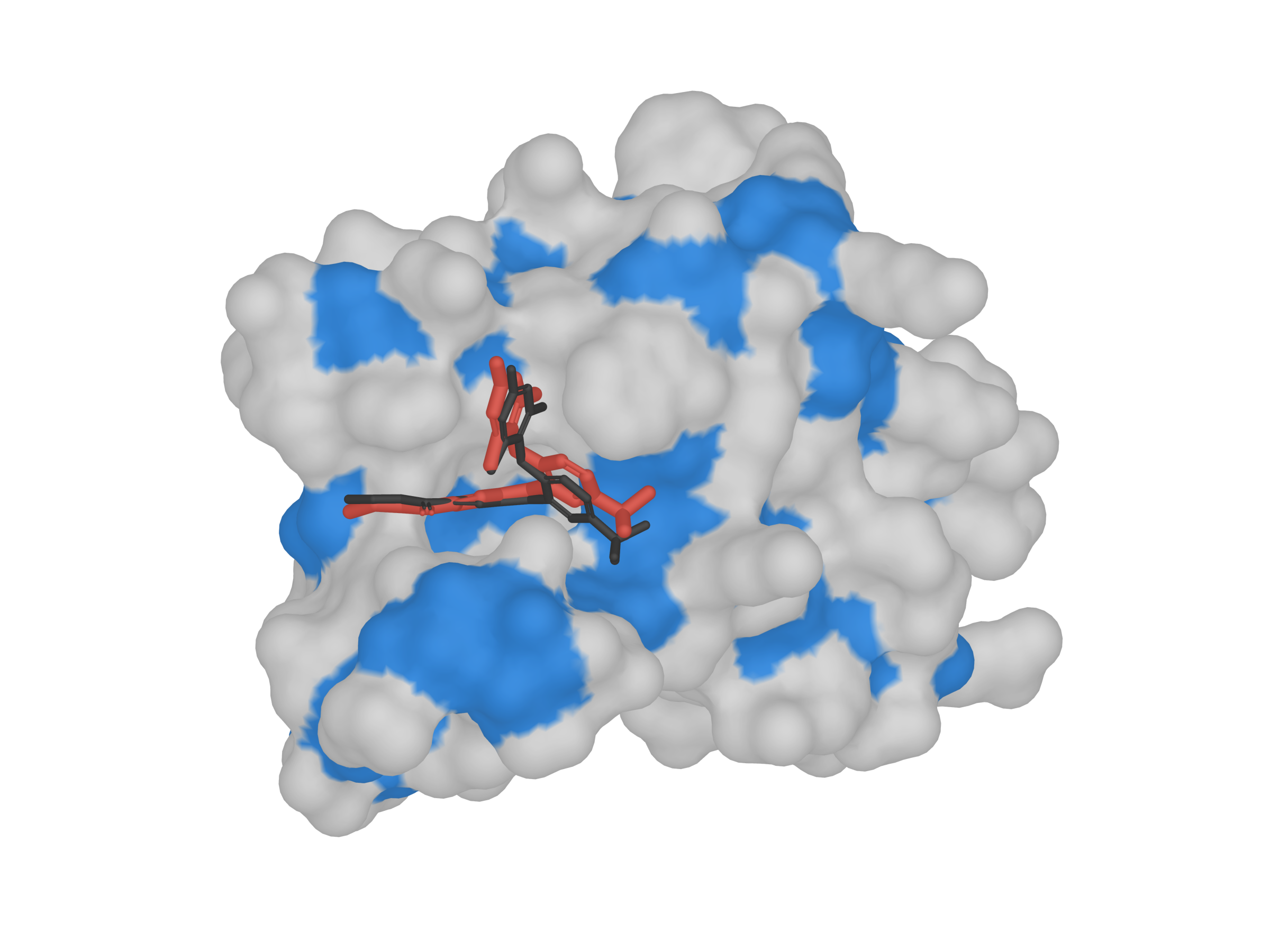} &
    \includegraphics[width=0.48\textwidth]{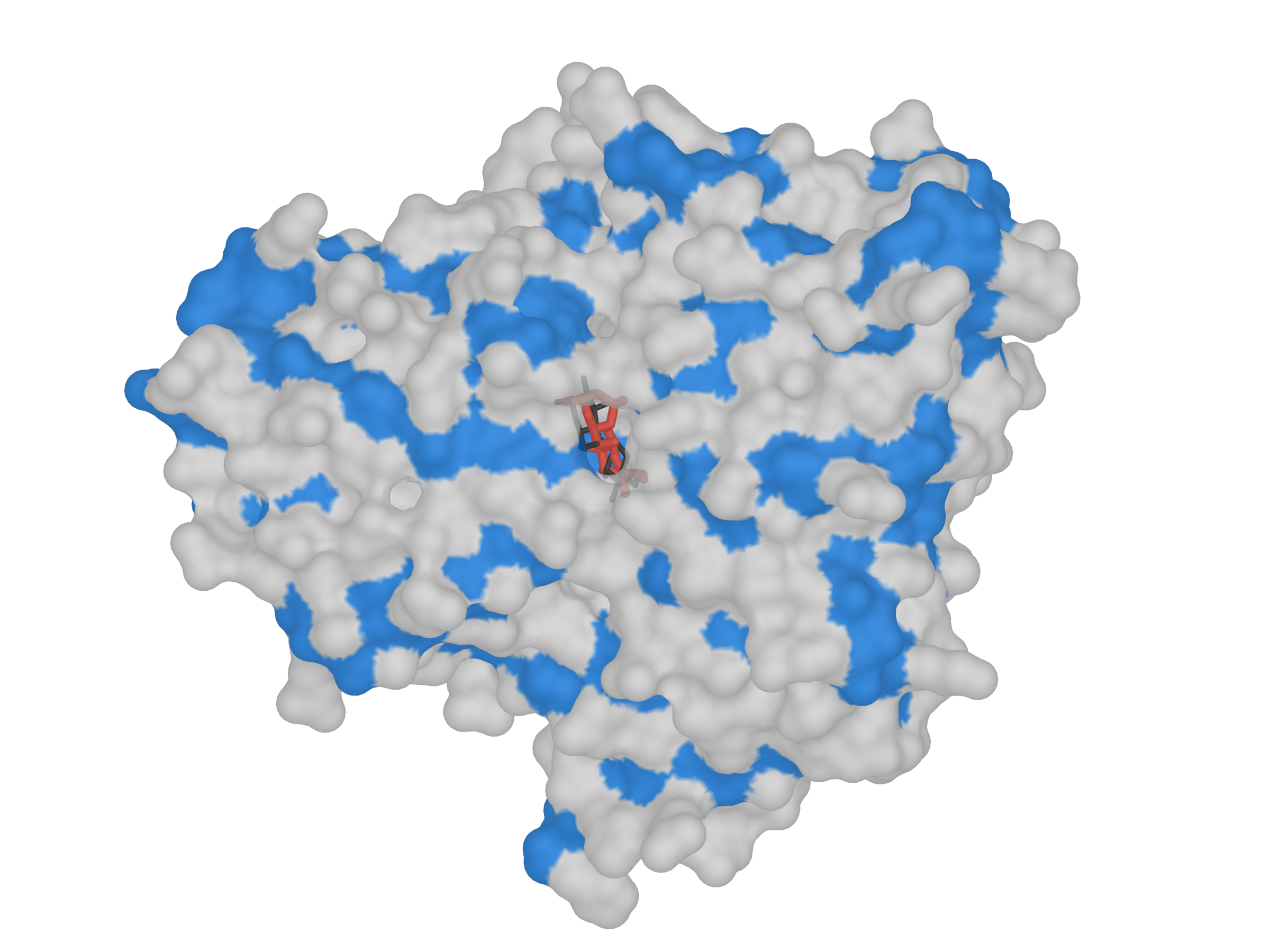} \\
    \small \textbf{6E6J}: L-RMSD \textbf{0.64 \AA}, AA-RMSD \textbf{0.77 \AA} &
    \small \textbf{6JAN}: L-RMSD \textbf{1.26 \AA}, AA-RMSD \textbf{0.90 \AA} \\[1.0em]

    \includegraphics[width=0.48\textwidth]{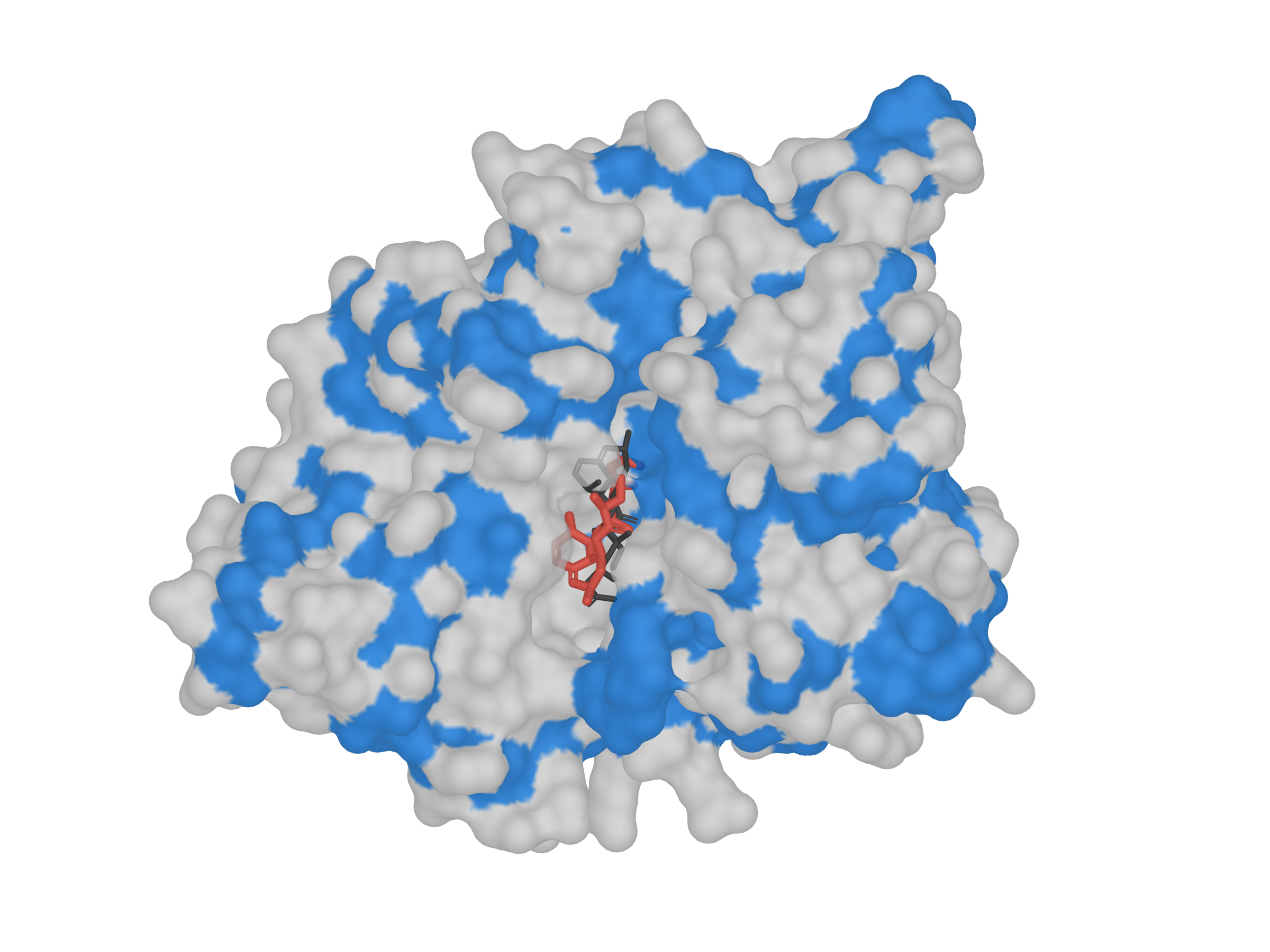} &
    \includegraphics[width=0.48\textwidth]{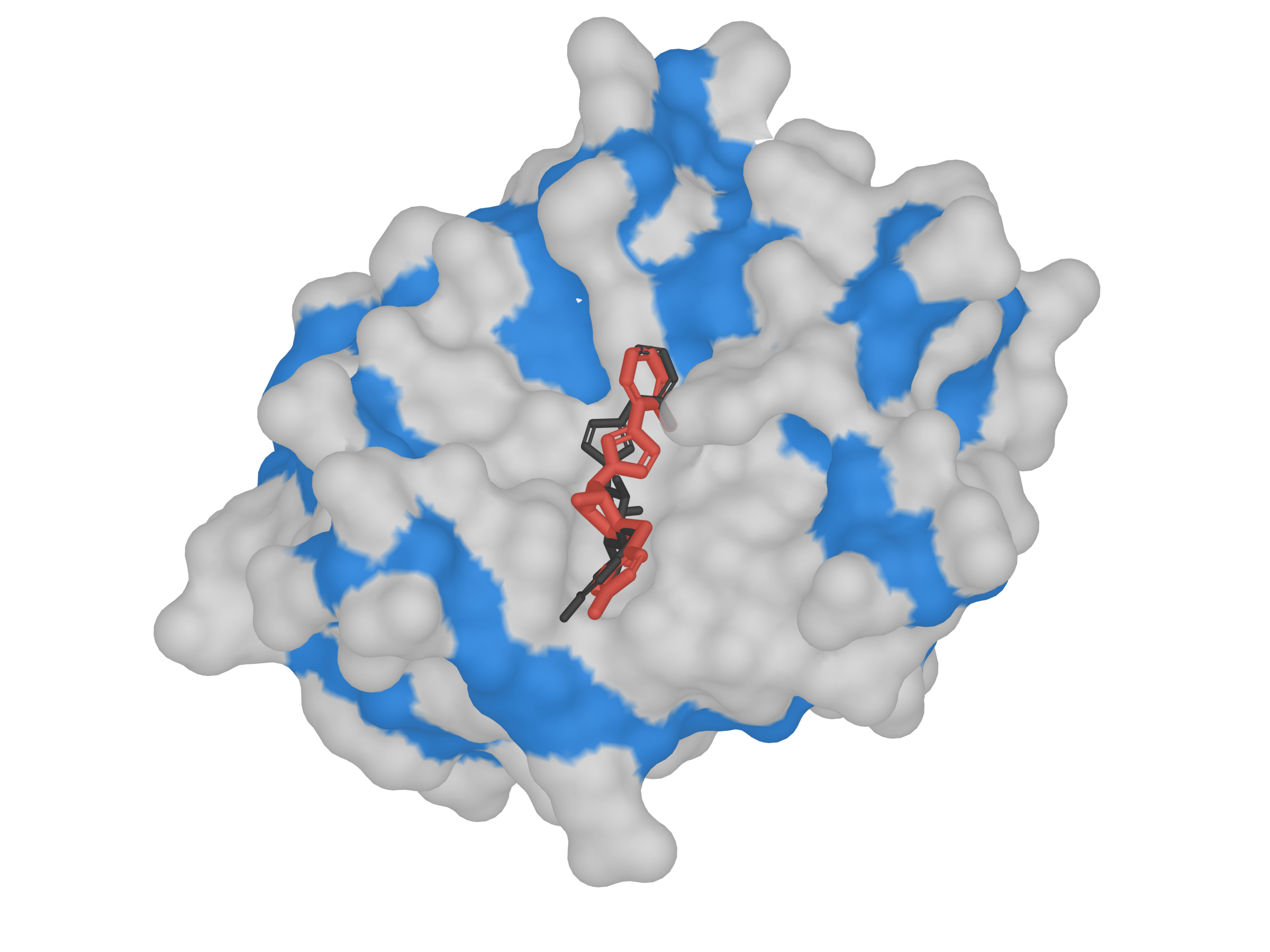} \\
    \small \textbf{6KJD}: L-RMSD \textbf{3.97 \AA}, AA-RMSD \textbf{3.04 \AA} &
    \small \textbf{6QLT}: L-RMSD \textbf{1.44 \AA}, AA-RMSD \textbf{1.00 \AA}
  \end{tabular}

  \caption{
    Extended results.
  }
  \label{fig:visualizations-grid}
\end{figure}

\end{document}